\newtheorem{theorem}{Theorem}[section]
\newtheorem{proposition}[theorem]{Proposition}
\newtheorem{lemma}[theorem]{Lemma}
\newtheorem{definition}[theorem]{Definition}
\newtheorem{example}{Example}[section]
\newtheorem{remark}[theorem]{Remark}
\newcommand{\C}{\mathbb{C}}
\newcommand{\N}{\mathbb{N}}
\newcommand{\R}{\mathbb{R}}
\newcommand{\Z}{\mathbb{Z}}
\renewcommand{\L}{\mathrm{L}}
\renewcommand{\S}{\mathbb{S}}
\newcommand{\E}{E}
\newcommand{\T}{\mathsf{T}}
\newcommand{\tr}{\mathrm{tr}}
\renewcommand{\vec}{{\mathrm{vec}}}
\newcommand{\vech}{{\mathrm{vech}}}
\renewcommand{\Re}{{\mathrm{Re}}}
\renewcommand{\Im}{{\mathrm{Im}}}
\newcommand{\beq}{\begin{equation}}
\newcommand{\eeq}{\end{equation}}
\newcommand{\beqa}{\begin{eqnarray*}}
\newcommand{\eeqa}{\end{eqnarray*}\noindent}
\newcommand{\beqan}{\begin{eqnarray}}
\newcommand{\eeqan}{\end{eqnarray}\noindent}
\newcommand{\ba}{\begin{array}}
\newcommand{\ea}{\end{array}}
\newcommand{\bpm}{\begin{pmatrix}}
\newcommand{\epm}{\end{pmatrix}}
\newcommand{\bsm}{\begin{smallmatrix}}
\newcommand{\esm}{\end{smallmatrix}}
\newcommand{\levy}{L\'{e}vy{ }}
\renewcommand{\sp}[2]{\left\langle #1,#2 \right\rangle}
\newcommand{\norm}[1]{\left|\left|#1\right|\right|}
\newcommand{\Fcal}{\mathcal{F}}
\newcommand{\Ecal}{\mathcal{E}}
\newcommand{\Bcal}{\mathcal{B}}
\newcommand{\Abf}{\mathbf{A}}
\newcommand{\ra}{\rightarrow}
\newcommand{\bs}{\backslash}
\renewcommand{\(}{\left(}
\renewcommand{\)}{\right)}
\newcommand{\eqd}{\stackrel{fidi}{=}}
\newcommand{\Bscr}{\mathscr{B}}
\newcommand{\Cscr}{\mathscr{C}}
\newcommand{\wh}{\widehat}
\newcommand{\wt}{\widetilde}
\newcommand{\1}{\mathbf{1}}
\newcommand{\W}{\mathcal{W}}
\newcommand{\eur}{\textnormal{\euro}}
\numberwithin{equation}{section}
\title{Option Pricing in Multivariate Stochastic Volatility Models of OU Type\thanks{The first author gratefully acknowledges support from the FWF (Austrian Science Fund) under grant P19456 and by the National Centre of Competence in Research ``Financial Valuation and Risk Management'' (NCCR FINRISK), Project D1 (Mathematical Methods in Financial Risk Management) of the Swiss National Science Foundation. The second and third author greatly appreciate the support of the Technische Universit\"at M\"unchen - Institute of Advanced Study, the second author additionally the one by the International Graduate School of Science and Engineering, funded by the German Excellence Initiative. The authors thank Christa Cucchiero for fruitful discussions. They are also grateful to two anonymous referees and an anonymous associate editor for their numerous helpful comments, that significantly improved the present article.}} 
\author{Johannes Muhle-Karbe\thanks{Departement f\"ur Mathematik, ETH Z\"urich, R\"amistrasse 101 CH-8092 Z\"urich, Switzerland. \emph{Email:} \ttfamily{johannes.muhle-karbe@math.ethz.ch}} \and Oliver Pfaffel\thanks{TUM Institute for Advanced Study \& Zentrum Mathematik, Technische Universit\"at München, Parkring 13, D-85748 Garching, Germany. \emph{Email:} \ttfamily{pfaffel@ma.tum.de}} \and Robert Stelzer\thanks{Institute of Mathematical Finance, Ulm University, Helmholtzstraße 18, D-89081 Ulm, Germany. \emph{Email:} \ttfamily{robert.stelzer@uni-ulm.de}}}
\date{November 1, 2011}
\begin{document}
\maketitle
\newcommand{\slugmaster}{%
\slugger{MMedia}{xxxx}{xx}{x}{x--x}}

\begin{abstract}
We present a multivariate stochastic volatility model with leverage, which is flexible enough to recapture the individual dynamics as well as the interdependencies between several assets, while still being highly analytically tractable.

First, we derive the characteristic function and give conditions that ensure its analyticity and absolute integrability in some open complex strip around zero. Therefore we can use Fourier methods to compute the prices of multi-asset options efficiently. To show the applicability of our results, we propose a concrete specification, the OU-Wishart model, where the dynamics of each individual asset coincide with the popular $\Gamma$-OU BNS model. This model can be well calibrated to market prices, which we illustrate with an example using options on the exchange rates of some major currencies. Finally, we show that covariance swaps can also be priced in closed form.
\end{abstract}

\begin{tabbing}
\emph{AMS Subject Classification 2000: }\ Primary: 91B28; \ Secondary: 60G51 
\end{tabbing}

\vspace{0.15cm}\noindent\emph{Keywords:}  multivariate stochastic volatility models, OU-type processes,  option pricing

\pagestyle{myheadings}
\thispagestyle{plain}
\markboth{}{Option Pricing in Multivariate Stochastic Volatility Models of OU Type}
\newpage 

\section{Introduction}

This paper deals with the pricing of options depending on several underlying assets. While there is a vast amount of literature on the pricing of single-asset options, see, e.g., \cite{Cont2004,Schoutens2003} for an overview, the amount of literature considering the multi-asset case is rather limited. This is most likely due to the fact that the trade-off between \emph{flexibility} and \emph{tractability} is particularly delicate in a multivariate setting. On the one hand, the model under consideration should be flexible enough to recapture stylized facts observed in real option prices. When dealing with multiple underlyings, this becomes challenging, since not only the individual assets but also their joint behaviour has to be taken into account.  On the other hand, one needs enough mathematical structure to calculate option prices in the first place and to be able to calibrate the model to market prices. Due to an increasing number of state variables and parameters, this is also not an easy task in a multidimensional framework. In this article we propose the multivariate OU-type stochastic volatility model of Pigorsch and Stelzer~\cite{pigorsch} in the generalised form introduced by Barndorff-Nielsen and Stelzer~\cite{Barndorff-Nielsen2009}, which seems to present a reasonable compromise between these competing requirements. 

The log-price processes $Y=(Y^1,\ldots,Y^d)$ of $d$ financial assets are modelled as
\beqan
d Y_t &=& (\mu + \beta(\Sigma_t) )\,dt + \Sigma_t^\frac12 \, dW_t + \rho(dL_t), \label{intro_Y}\\\
d\Sigma_t &=& (A\Sigma_t + \Sigma_t A^\T) \,dt + dL_t , \label{intro_Sigma}
\eeqan
where $\mu\in\R^d$, $A$ is a real $d\times d$ matrix, and $\beta$, $\rho$ are linear operators from the real $d\times d$ matrices to $\R^d$. Moreover, $W$ is an $\R^d$-valued Wiener process and  $L$ is an independent matrix subordinator, i.e., a \levy process which only has positive semidefinite increments. Hence, the covariance process $\Sigma$ is an Ornstein-Uhlenbeck (henceforth OU) type process with values in the positive semidefinite matrices, cf. Barndorff-Nielsen and Stelzer \cite{barndorff}. Thus we call (\ref{intro_Y}), (\ref{intro_Sigma}) the \emph{multivariate stochastic volatility model of OU type}. The positive semidefinite OU type process $\Sigma$ introduces a stochastic volatility and, what is difficult to achieve using several univariate models, a stochastic correlation between the assets. Moreover, $\Sigma$ is mean reverting and increases only by jumps. The jumps represent the arrival of new information that results in positive shocks in the volatility and positive or negative shocks in the correlation of some assets. Due to the leverage term $\rho(dL_t)$ they are correlated with price jumps. The present model is a multivariate generalisation of the non-Gaussian OU type stochastic volatility model introduced by Barndorff-Nielsen and Shephard \cite{Barndorff-Nielsen2001} (henceforth BNS model). For one underlying, these models are found to be both flexible and tractable in  Nicolato and Venardos \cite{Nicolato2003}. The key reason is that  the characteristic function of the return process can often be computed in closed form, which allows European options to be be priced efficiently using the Fourier methods introduced by Carr and Madan \cite{Carr1999} as well as Raible \cite{raible.2000}. In the present study, we show that a similar approach is also applicable in the multivariate case. Recently, Benth and Vos \cite{Benth} discussed a somewhat similar model in the context of energy markets. However, they do not establish conditions for the applicability of Fourier pricing and, more importantly, do not calibrate their model to market prices.

Alternatively, the covariance process $\Sigma$ can also be modelled by other processes taking values in the positive semidefinite matrices. In particular, several authors have advocated to use a diffusion model based on the Wishart process, cf., e.g., Da Fonseca, Grasselli, and Tebaldi \cite{fonseca}, Gourieroux \cite{gou}, Gourieroux and Sufana \cite{Gourieroux2010}, and Da Fonseca and Grasselli~\cite{dafonseca.grasselli.10}. This leads to a multivariate generalisation of the model of Heston \cite{Heston1993}. However, there is empirical evidence suggesting that volatility jumps (together with the stock price), cf. Jacod and Todorov \cite{Jacod2009}, which cannot be recaptured by a diffusion model. Moreover, the treatment of square-root processes on the cone of positive semidefinite matrices is mathematically quite involved, see Cuchiero, Filipovi\'c, Mayerhofer, and Teichmann\cite{Cuchiero2009}.\footnote{This study generalizes the theory of \emph{affine} processes from the positive univariate factors treated in \cite{DPS2000,dfs2003} to factor processes taking values in the cone of symmetric positive semidefinite matrices. In particular, to ensure the existence of square-root processes, a quite intricate drift condition turns out to be necessary.} For example, whereas Da Fonseca and Grasselli \cite{dafonseca.grasselli.10} have very recently succeeded in calibrating their model to market prices, the resulting parameters do no satisfy the drift condition for the existence of the underlying square-root diffusion, suggesting that a more sophisticated optimization routine is necessary. 

Another possible approach is to consider multivariate models based on a concatenation of univariate building blocks. This approach is taken, e.g., by Luciano and Schoutens \cite{Luciano2006} using L\'evy processes, by Dimitroff, Lorenz, and Szimayer \cite{Dimitroff2009}, who consider a multivariate Heston model, and by Hubalek and Nicolato \cite{Hubalek}, who put forward a multifactor BNS model. However, all these models either have a somewhat limited capability to catch complex dependence structures (compare Section \ref{ss:empirical}) or lead to tricky (factor) identification issues. Apart from models where all parameters are determined by single-asset options, we are not aware of successful calibrations of such models. The paper of Ma \cite{Ma2009} proposes a two-dimensional Black-Scholes model where the correlation between the two Brownian motions is stochastic and given by a diffusion process with values in an interval contained in $[-1,1]$. However, pricing can only be done via Monte-Carlo simulation in this model. In addition, an extension to higher dimensions is not obvious, since the necessary positive semidefiniteness of the correlation matrix of the Brownian noise imposes additional constraints, which are hard to incorporate.
  
The remainder of this paper is organised as follows. Sections~\ref{sec: OU} and \ref{subsection: Definition} introduce the multivariate stochastic volatility model of OU type. Afterwards, we derive the joint characteristic function of $(Y_t,\Sigma_t)$. We then show in Section~\ref{sec: mgf} that a simple moment condition on $L$ implies analyticity and absolute integrability of the moment generating function of $Y_t$ in some open complex strip around zero. Equivalent martingale measures are discussed in Section~\ref{sec: emm}, where we also present a subclass that preserves the structure of our model. In Section~\ref{pricing}, we recall how to use Fourier methods to compute prices of multi-asset options efficiently. Subsequently, we propose the OU-Wishart model, where $L$ is a compound Poisson process with Wishart distributed jumps. It turns out that the OU-Wishart model has margins which are in distribution equivalent to a $\Gamma$-OU BNS model, one of the tractable specifications commonly used in the univariate case. Moreover, the characteristic function can be computed in closed form, which makes option pricing and calibration particularly feasible. In an illustrative example we calibrate a bivariate OU-Wishart model to market prices, and compare its performance to the multivariate Variance Gamma model of \cite{Luciano2006} and a multivariate extension with stochastic volatility. As a final application, we show in Section~\ref{sec: cov swaps} that covariance swaps can also be priced in closed form. The appendix contains a result on multidimensional analytic functions which is needed to establish the regularity of the moment generating function in Section~\ref{sec: mgf}.

\subsection*{Notation}

$M_{d,n}(\R)$ (resp.\ $M_{d,n}(\C)$) represent the $d\times n$ matrices with real (resp.\ complex) entries. We abbreviate $M_d(\cdot)=M_{d,d}(\cdot)$. $\S_d$ denotes the subspace of $M_d(\R)$ of all symmetric matrices. We write $\S_d^+$ for the cone of all positive semidefinite matrices, and $\S_d^{++}$ for the open cone of all positive definite matrices. The identity matrix in $M_d(\R)$ is denoted by $I_d$. $\sigma(A)$ denotes the set of all eigenvalues of $A\in M_d(\C)$. We write $\Re(z)$ and $\Im(z)$ for the real or imaginary part of $z\in\C^d$ or $z\in M_d(\C)$, which has to be understood componentwise. The components of a vector or matrix are denoted by subscripts, however for stochastic processes we use superscripts to avoid double indices.

On $\R^d$, we typically use the Euclidean scalar product, $\sp{x}{y}_{\R^d}:=x^\T y$, and on $M_d(\R)$ or $\S_d$ the scalar products given by $\sp{A}{B}_{M_d(\R)}:=\tr(A^\T B)$ or $\sp{A}{B}_{\S_d}:=\tr(A B)$, respectively. However, due to the equivalence of all norms on finite dimensional vector spaces, most results hold independently of the norm. We also write $\sp{x}{y}=x^\T y$ for $x,y\in\C^d$, although this is only a bilinear form but not a scalar product on $\C^d$.

We denote by $\vec:M_d(\R)\to\R^{d^2}$ the bijective linear operator that stacks the columns of a matrix below one other. With the above norms, $\vec$ is a Hilbert space isometry. Likewise, for a symmetric matrix $S\in\S_d$ we denote by $\vech(S)$ the vector consisting of the columns of the upper-diagonal part including the diagonal.

Furthermore, we employ an intuitive notation concerning integration with respect to matrix-valued processes. For an $M_{m,n}(\R)$-valued \levy process $L$, and $M_{d,m}(\R)$ resp.\ $M_{n,p}(\R)$- valued  processes $X,Y$ integrable with respect to $L$, the term $\int_0^t X_s \, dL_s Y_s$ is to be understood as the $d\times p$ (random) matrix with $(i,j)$-th entry  $\sum_{k=1}^m \sum_{l=1}^n \int_0^t X_s^{ik} \, dL_s^{kl} Y_s^{lj}$.

\section{The multivariate stochastic volatility model of OU type}\label{s:model}

For the remainder of the paper, fix a filtered probability space $(\Omega,\Fcal,(\Fcal_t)_{t\in [0,T]},P)$ in the sense of \cite[Definition I.1.3]{Jacod2003}, where $\Fcal_0=\{\Omega,\emptyset\}$ is trivial and $T>0$ is a a fixed terminal time.

\subsection{Positive semidefinite processes of OU type}\label{sec: OU}

To formulate our model, we need to introduce the concept of matrix subordinators as studied in \cite{bnpa08}. 

\begin{definition}
An $\S_d$-valued \levy Process $L=(L_t)_{t\in\R_+}$ is called \emph{matrix subordinator}, if $L_t-L_s\in\S_d^+$ for all $t>s$.
\end{definition}

The characteristic function of a matrix subordinator $L$ is given by $\E(e^{i\tr(Z L_1)}) =\exp(\psi_L(Z))$ for the \emph{characteristic exponent}
\[ \psi_L(Z) = i\tr(\gamma_L Z) + \int_{\S_d^+} ( e^{i\tr(XZ)} - 1 ) \, \kappa_L(dX),\quad Z \in M_d(\R), \]
where $\gamma_L\in\S_d^+$ and $\kappa_L$ is a \levy measure on $\S_d$ satisfying $\kappa_L(\S_d\bs\S_d^+)=0$ as well as $\int_{\{\norm{X}\leq 1\}} \norm{X} \,\kappa_L(dX) < \infty$.

\emph{Positive semidefinite processes of OU type} are a generalisation of nonnegative OU type processes (cf.\ \cite{barndorff}). Let $L$ be a matrix subordinator and $A\in M_d(\R)$. The positive semidefinite OU type process $\Sigma=(\Sigma_t)_{t\in\R_+}$ is defined as the unique strong solution to the stochastic differential equation
\beq d\Sigma_t = (A\Sigma_t + \Sigma_t A^\T) \,dt + dL_t, \quad \Sigma_0\in\S_d^+. \label{sde_ou_general}\eeq
It is given by
\beq \Sigma_t = e^{A t} \Sigma_0 e^{A^\T t} + \int_0^t e^{A(t-s)} \, dL_s \, e^{A^\T(t-s)}. \label{ou:solution}\eeq
Since $\Sigma_t\in\S_d^{+}$ for all $t\in\R_+$, this process can be used to model the stochastic evolution of a covariance matrix. As in the univariate case there exists a closed form expression for the integrated volatility. Suppose 
\beq 0 \notin \sigma(A)+\sigma(A). \label{cond:eigenvalues}\eeq
Then, the integrated OU type process $\Sigma^+$ is given by
\beq \Sigma_t^+ := \int_0^t \Sigma_s \,ds = \Abf^{-1}(\Sigma_t-\Sigma_0-L_t), \label{prop_int_ou}\eeq
where $\Abf:X\mapsto A X + X A^\T$. Note that condition (\ref{cond:eigenvalues}) implies that the operator $\Abf$ is invertible, cf. \cite[Theorem 4.4.5]{horn90}. In the case where $\Sigma$ is \emph{mean reverting}, i.e., $A$ only has eigenvalues with strictly negative real part, condition (\ref{cond:eigenvalues}) is trivially satisfied.

\subsection{Definition and marginal dynamics of the model}\label{subsection: Definition}

The following model was introduced and studied in \cite{pigorsch}  from a statistical point of view in the no-leverage case and has also been considered in \cite{Barndorff-Nielsen2009}. Here we discuss its applicability to option pricing.

 Let $L$ be a matrix subordinator with characteristic exponent $\psi_L$ and $W$ an independent $\R^d$-valued Wiener process. The \emph{multivariate stochastic volatility model of OU type} is then given by
\beqan
d Y_t &=& (\mu + \beta(\Sigma_t) )\,dt + \Sigma_t^\frac12 \, dW_t + \rho(dL_t), \quad Y_0\in\R^d  \label{sde_Y}\\
d\Sigma_t &=& (A\Sigma_t + \Sigma_t A^\T) \,dt + dL_t, \quad \Sigma_0\in\S_d^+, \label{sde_Sigma}
\eeqan
with linear operators $\beta,\rho:M_d(\R) \ra \R^d$, $\mu\in\R^d$, and $A\in M_d(\R)$ such that $0 \notin \sigma(A)+\sigma(A)$.

We have specified the \emph{risk premium} $\beta$ and the \emph{leverage operator} $\rho$ in a quite general form. The following specification turns out to be particularly tractable.

\begin{definition}
We call $\beta$ and $\rho$ \emph{diagonal} if, for $\beta_1,\ldots,\beta_d\in\R$ and $\rho_1,\ldots,\rho_d \in\R$,
\[ \beta(X) = \( \ba{c} \beta_1 X_{11} \\ \vdots \\ \beta_d X_{dd} \ea \), \quad \rho(X) = \( \ba{c} \rho_1 X_{11} \\ \vdots \\ \rho_d X_{dd} \ea \), \quad\forall\, X\in M_d(\R). \]
\end{definition}

In the following, we will denote for each $i \in \{1,\ldots,d\}$ by $\beta^i(X)$ and $\rho^i(X)$ the $i$-th component of the vector $\beta(X)$ or $\rho(X)$, respectively. The marginal dynamics of the individual assets have been derived in \cite[Proposition 4.3]{Barndorff-Nielsen2009}:

\begin{theorem}
Let $i \in \{1,\ldots,d\}$. Then we have
\[
\(Y_t^i\)_{t\in\R_+} \eqd \( \mu_{i}t + \beta^i(\Sigma_t^+) + \int_0^t (\Sigma_s^{ii})^{\frac12} \,dW_s^{i} + \rho^i(L_t) \)_{t\in\R_+},
\]
where $\eqd$ denotes equality of all finite dimensional distributions.
\label{theorem_marginal_dynamics}\end{theorem}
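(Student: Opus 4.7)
The approach is to integrate the SDE (\ref{sde_Y}) componentwise, identify the summands that are already in the desired form, and then reduce the remaining $d$-dimensional stochastic integral to a scalar one by conditioning on $L$.

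Integrating the $i$-th row of (\ref{sde_Y}) from $0$ to $t$ (with $Y_0=0$, which is implicit in the statement) yields
\[ Y^i_t \,=\, \mu_i t + \int_0^t \beta^i(\Sigma_s)\,ds + \int_0^t (\Sigma_s^{1/2}\,dW_s)^i + \rho^i(L_t). \]
Because $\beta^i:M_d(\R)\to\R$ is a bounded linear functional it commutes with the Lebesgue integral, so $\int_0^t \beta^i(\Sigma_s)\,ds = \beta^i(\Sigma_t^+)$ by the definition (\ref{prop_int_ou}) of $\Sigma^+$. This produces three of the four summands on the right-hand side of the claim, and $\rho^i(L_t)$ is already in the correct form.

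It remains to compare the two continuous local martingales
\[ M^i_t := \int_0^t (\Sigma_s^{1/2}\,dW_s)^i, \qquad N^i_t := \int_0^t (\Sigma_s^{ii})^{1/2}\,dW_s^i. \]
Using $\Sigma_s^{1/2} = (\Sigma_s^{1/2})^\T$, their predictable quadratic variations agree:
\[ [M^i]_t \,=\, \sum_{k=1}^d \int_0^t \bigl((\Sigma_s^{1/2})^{ik}\bigr)^2\,ds \,=\, \int_0^t \Sigma_s^{ii}\,ds \,=\, [N^i]_t. \]
To promote this to equality in joint distribution with the whole path of $L$, I would condition on $\Fcal^L := \sigma(L_s:s\geq 0)$. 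Since $W$ and $L$ are independent by hypothesis, $W$ remains a standard $\R^d$-valued Brownian motion under the regular conditional law $P(\,\cdot\mid\Fcal^L)$, while $\Sigma$ becomes $\Fcal^L$-measurable. For any $0\leq t_1<\cdots<t_n$, the It\^o isometry then shows that both $(M^i_{t_1},\ldots,M^i_{t_n})$ and $(N^i_{t_1},\ldots,N^i_{t_n})$ are, conditional on $\Fcal^L$, centred Gaussian vectors with the same covariance matrix $\bigl(\int_0^{t_j\wedge t_k}\Sigma_s^{ii}\,ds\bigr)_{j,k}$, hence have identical conditional laws.

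Since the three summands $\mu_i t + \beta^i(\Sigma_t^+) + \rho^i(L_t)$ are $\Fcal^L$-measurable, integrating this conditional equality against an arbitrary bounded function of $L$ upgrades it to equality of the unconditional finite-dimensional distributions of the two processes, which is the claim. The main technical obstacle is making the conditional-Brownian-motion argument precise: one has to justify that $M^i$ and $N^i$, initially defined with respect to the joint filtration generated by $(W,L)$, may legitimately be viewed as stochastic integrals against a $P(\,\cdot\mid\Fcal^L)$-Brownian motion. A cleaner alternative that sidesteps this entirely is to compute the joint characteristic function $E\bigl[\exp\bigl(i\sum_k u_k Y^i_{t_k}\bigr)\bigr]$ by first conditioning on $\Fcal^L$, exploiting the resulting conditional Gaussianity of the stochastic-integral part, and then comparing with the analogous expression for the right-hand side.
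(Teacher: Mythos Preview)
The paper does not supply its own proof of this statement; it simply cites \cite[Proposition~4.3]{Barndorff-Nielsen2009}. Your argument is correct and follows the natural route one would expect such a proof to take: integrate the SDE componentwise, pull the linear map $\beta^i$ through the time integral to produce $\beta^i(\Sigma_t^+)$, and then show that the $d$-dimensional martingale part $\int_0^t(\Sigma_s^{1/2}\,dW_s)^i$ has, jointly with the path of $L$, the same finite-dimensional distributions as the scalar integral $\int_0^t(\Sigma_s^{ii})^{1/2}\,dW_s^i$.

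Your conditioning argument is the right idea, and your honest flag about the technical point---that one must justify treating the stochastic integrals as integrals against a conditional Brownian motion---is well placed. The characteristic-function alternative you mention is indeed the cleanest way to make this airtight: for any finite collection of times and real coefficients $u_1,\ldots,u_n$, compute
\[
\E\Bigl[\exp\Bigl(i\sum_k u_k M^i_{t_k}\Bigr)\Big|\Fcal^L\Bigr]
= \exp\Bigl(-\tfrac12\sum_{j,k} u_j u_k \int_0^{t_j\wedge t_k}\Sigma_s^{ii}\,ds\Bigr)
\]
directly from the independence of $W$ and $L$ and the fact that $\Sigma$ is $\Fcal^L$-measurable; the same expression results for $N^i$, and one then multiplies by the $\Fcal^L$-measurable factor coming from the other three summands before taking unconditional expectations. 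This avoids regular conditional probabilities altogether. Your observation that $Y_0=0$ is implicitly assumed in the stated identity is also accurate.
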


Let us now consider the case where $A$ is a diagonal matrix, $A=\left(\bsm a_1 & & 0 \\ & \ddots & \\ 0 & & a_d \esm\right)$, and $\beta$, $\rho$ are diagonal as well. Then, for every $i \in \{1,\ldots,d\}$, we have
\beqan
dY_t^i &\eqd & (\mu_i + \beta_i \Sigma_t^{ii}) \,dt + (\Sigma_t^{ii})^{\frac12} \,dW_t^i + \rho_i \,dL_t^{ii}, \label{sde: Y marginal}\\
d\Sigma_t^{ii} &=& 2a_i\Sigma_t^{ii} \,dt + dL_t^{ii} \label{sde: Sigma marginal}.
\eeqan
Evidently, every diagonal element $L^{ii}$, $i=1,\ldots,d$, of a matrix subordinator $L$ is a univariate subordinator, and thus $\Sigma^{ii}$ is a nonnegative OU type process. Consequently, the model for the $i$-th asset is equivalent in distribution to a univariate BNS model.

\subsection{Characteristic function}\label{sec: cf}

Let $\sp{\cdot}{\cdot}_V$, $\sp{\cdot}{\cdot}_W$ be bilinear forms as introduced in the notation, where $V$,$W$ may be either $\R^d$, $\C^d$ or $M_d(\cdot)$. Given a linear operator $T:V\to W$, the \emph{adjoint} $T^*:W\to V$ is the unique linear operator such that $\sp{Tx}{y}_W=\sp{x}{T^* y}_V$ for all $x\in V$ and $y\in W$. Directly by definition we obtain the following:

\begin{lemma}\label{adjointforfourier}
Let $y\in\R^d$, $\,z\in M_d(\R)$ and $\,t\in\R_+$. Then the adjoints of the linear operators 
\begin{align*}
\Abf &: X\mapsto A X + X A^\T, \quad \Bscr(t):X\mapsto e^{A t} X e^{A^\T t} - X, \\
\Cscr(t)&:X\mapsto  e^{At}Xe^{A^\T t} z + \beta(\Abf^{-1}(\Bscr(t)X))y^\T + \rho(X)y^\T + \frac{i}{2} y y^\T \Abf^{-1}(\Bscr(t)X)
\end{align*}
on $M_d(\C)$ are given by
\begin{align*}
\Abf^*&:X\mapsto A^\T X + X A, \quad \Bscr(t)^*:X\mapsto e^{A^\T t} X e^{A t} - X, \\
\Cscr(t)^*&:X\mapsto e^{A^\T t} X z^\T e^{A t} + \rho^*(Xy) + \Bscr(t)^*\Abf^{-*} \left( \beta^*(Xy)+\frac{i}{2} X y y^\T \right).
\end{align*}
\end{lemma}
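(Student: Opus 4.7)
The plan is to verify each of the three adjoint formulas directly from the defining identity $\sp{T X}{Y}_{M_d(\C)} = \sp{X}{T^* Y}_{M_d(\C)}$ with the bilinear form $\sp{A}{B}_{M_d(\C)} = \tr(A^\T B)$. All three computations boil down to expanding the trace, applying transposition rules $(AB)^\T = B^\T A^\T$ and $(e^{At})^\T = e^{A^\T t}$, and then repeatedly shuffling factors via cyclicity of the trace until the argument $X$ (or $X^\T$) appears in the leftmost position. None of this is deep; the only real work is the bookkeeping, so I would first set up two elementary lemmas to reuse throughout: (i) if $M,N$ are fixed then the adjoint of $X \mapsto MXN$ is $Y \mapsto M^\T Y N^\T$; and (ii) $(T_1 T_2)^* = T_2^* T_1^*$, together with $(\Abf^{-1})^* = (\Abf^*)^{-1}$, which I would denote $\Abf^{-*}$.

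For $\Abf$ and $\Bscr(t)$, the verification is immediate from (i): write $\Abf X = A X I + I X A^\T$ as a sum of two maps of the form $X \mapsto MXN$, take adjoints, and read off $\Abf^* Y = A^\T Y + YA$; for $\Bscr(t)$ take $M = e^{At}$, $N = e^{A^\T t}$, and subtract the identity, yielding $\Bscr(t)^* Y = e^{A^\T t} Y e^{At} - Y$.

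The heart of the argument is $\Cscr(t)$, which I would split into its four summands and compute the adjoint of each. The term $e^{At} X e^{A^\T t} z$ is again of the form $MXN$ with $M = e^{At}$ and $N = e^{A^\T t}z$, giving the contribution $e^{A^\T t}Yz^\T e^{At}$. For the two rank-one-in-$y$ pieces $\beta(\Abf^{-1}(\Bscr(t)X)) y^\T$ and $\rho(X) y^\T$, the key identity is the passage between the two bilinear forms:
\[
\sp{v y^\T}{Y}_{M_d(\C)} = \tr(y v^\T Y) = v^\T (Yy) = \sp{v}{Yy}_{\R^d},
\]
valid for any $v \in \C^d$. Applying this with $v = \beta(\Abf^{-1}(\Bscr(t)X))$ and $v = \rho(X)$ respectively, and then moving the adjoints of $\beta$, $\rho$, $\Abf^{-1}$, $\Bscr(t)$ through by (ii), produces $\rho^*(Yy)$ and $\Bscr(t)^*\Abf^{-*}\beta^*(Yy)$. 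The final piece $\tfrac{i}{2} y y^\T \Abf^{-1}(\Bscr(t) X)$ is handled by a single application of (i) followed by (ii): take the adjoint of left-multiplication by $y y^\T$, then peel off $\Abf^{-1}$ and $\Bscr(t)$ from inside the argument of the inner product. Collecting the four adjoints yields the stated formula for $\Cscr(t)^*$.

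The main (minor) obstacle is simply keeping careful track of the placement of $y$, $y^\T$, and the transposes of $X$ and $Y$ in the fourth summand, since a misapplied cyclicity gives a matrix product in the wrong order; I would therefore write out that piece in full indices as a sanity check before declaring the computation done.
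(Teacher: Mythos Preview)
Your proposal is correct and takes essentially the same approach as the paper: the paper offers no proof beyond the remark ``Directly by definition we obtain the following,'' and your outline is precisely the direct verification this refers to, carried out term by term via the trace identities and the composition rule $(T_1T_2)^* = T_2^*T_1^*$. One small caveat: if you carry out the fourth summand exactly as you describe (adjoint of left-multiplication by $yy^\T$ is again left-multiplication by $(yy^\T)^\T=yy^\T$), you obtain $\Bscr(t)^*\Abf^{-*}\big(\tfrac{i}{2}\,yy^\T X\big)$ rather than $\Bscr(t)^*\Abf^{-*}\big(\tfrac{i}{2}\,X\,yy^\T\big)$; the two agree for $X=I_d$, which is the only place the lemma is applied, so this does not affect anything downstream.
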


Note that for diagonal $\rho$ it holds that $\rho^*(X)=\left(\bsm \rho_1 X_{11}  &   & 0 \\
																														 & \ddots &   \\
																														0 & & \rho_d X_{dd}  \esm\right)$ for all $X\in M_d(\R)$.
																														
Our main objective in this section is to compute the joint characteristic function of $(Y_t,\Sigma_t)$. This will pave the way for Fourier pricing of multi-asset options later on.  Note that we use the scalar product 
\[ \sp{(x_1,y_1)}{(x_2,y_2)} := x_1^\T x_2 + \tr(y_1^\T y_2) \]
on $\R^d\times M_d(\R)$.

\begin{theorem}[Joint characteristic function]
For every $(y,z)\in\R^d\times M_d(\R)$ and $t\in\R_+$, the joint characteristic function of $(Y_t,\Sigma_t)$ is given by
\begin{align*}
&\E[ \exp \(i\sp{(y,z)}{(Y_t,\Sigma_t)}\) ] \\
&= \exp\left\{ i y^\T(Y_0+\mu t) + i \tr( \Sigma_0 e^{A^\T t} z e^{A t} ) \right. \\
& \left. \quad + i \tr\left(\Sigma_0\left(e^{A^\T t} \Abf^{-*} \left( \beta^*(y)+\frac{i}{2} y y^\T \right) e^{A t} - \Abf^{-*} \left( \beta^*(y)+\frac{i}{2} y y^\T \right) \right)\right) \right. \\
& \left. \quad + \int_0^t \psi_L \left(  e^{A^\T s} z e^{A s} + \rho^*(y) + e^{A^\T s} \Abf^{-*} \left( \beta^*(y)+\frac{i}{2} y y^\T \right) e^{A s} - \Abf^{-*} \left( \beta^*(y)+\frac{i}{2} y y^\T \right)\right)
ds\right\},
\end{align*}
where $\Abf^{-*}:=(\Abf^{*})^{-1}$ denotes the inverse of the adjoint of $\Abf:X\mapsto AX+XA^\T$, that is, the inverse of $\Abf^*:X\mapsto A^\T X + X A$.
\label{cf}\end{theorem}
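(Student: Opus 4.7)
The plan is to exploit the closed-form representations of $\Sigma_t$ and $\Sigma_t^+$ in terms of $\Sigma_0$ and the driving matrix subordinator $L$, condition out the Brownian integral using independence of $W$ and $L$, and finish with the L\'evy--Khintchine formula for deterministic integrands against $L$.

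First, integrating (\ref{sde_Y}) and using (\ref{prop_int_ou}) one has
\[ Y_t = Y_0 + \mu t + \beta(\Sigma_t^+) + \int_0^t \Sigma_s^{\frac12}\,dW_s + \rho(L_t). \]
Since $W$ is independent of $L$ and $\Sigma$ is driven by $\Sigma_0$ and $L$ alone, conditioning on $\Fcal_t^L$ turns the Wiener integral into a centred Gaussian with covariance $\Sigma_t^+$, so
\[ \E\!\left[\exp\!\left(iy^\T \!\int_0^t \Sigma_s^{\frac12}\,dW_s\right)\Big|\Fcal_t^L\right] = \exp\!\left(-\tfrac12\tr(\Sigma_t^+ yy^\T)\right). \]
Writing $iy^\T\beta(\Sigma_t^+) = i\tr(\Sigma_t^+\beta^*(y))$, the linear and Gaussian quadratic contributions in $\Sigma_t^+$ collapse into the single term $i\tr(\Sigma_t^+ Q)$ with $Q := \beta^*(y) + \tfrac{i}{2} yy^\T$.

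Next I would substitute the closed forms
\[ \Sigma_t = e^{At}\Sigma_0 e^{A^\T t} + \int_0^t e^{A(t-s)}\,dL_s\,e^{A^\T(t-s)}, \qquad \Sigma_t^+ = \Abf^{-1}(\Bscr(t)\Sigma_0) + \int_0^t \Abf^{-1}\bigl(\Bscr(t-s)\,dL_s\bigr), \]
together with $iy^\T\rho(L_t) = i\int_0^t \tr(\rho^*(y)\,dL_s)$, and apply the adjoint identities of Lemma~\ref{adjointforfourier} to move all deterministic operators across the trace onto the test variables. The $\Sigma_0$-dependent pieces then produce exactly the first two deterministic terms in the exponent of the theorem (using that $\tr(X^\T\Sigma)=\tr(X\Sigma)$ under cyclic permutation whenever $\Sigma$ is symmetric), while the $L$-dependent pieces consolidate into a single stochastic integral
\[ i\int_0^t \tr\!\Bigl(\bigl[e^{A^\T(t-s)} z e^{A(t-s)} + \rho^*(y) + e^{A^\T(t-s)}\Abf^{-*}(Q) e^{A(t-s)} - \Abf^{-*}(Q)\bigr]\,dL_s\Bigr). \]

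Finally, taking unconditional expectation and applying the L\'evy--Khintchine identity for deterministic $M_d(\R)$-valued integrands against a matrix subordinator,
\[ \E\!\left[\exp\!\left(i\int_0^t \tr(N(s)\,dL_s)\right)\right] = \exp\!\left(\int_0^t \psi_L(N(s))\,ds\right), \]
followed by the Lebesgue change of variables $s\mapsto t-s$, gives precisely the claimed formula.

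The main obstacle I expect is the bookkeeping of adjoints when shuffling $\Bscr(t-s)$, $\Abf^{-1}$, and the conjugations by $e^{A(t-s)}$ into and out of the trace---Lemma~\ref{adjointforfourier} is tailored exactly to this calculus, and it is crucial that $\Sigma_t,\Sigma_t^+,L_t$ are symmetric so that only the symmetric parts of $z$, $\beta^*(y)$, $\rho^*(y)$ contribute when paired with them. A subsidiary technical point is the justification of the L\'evy--Khintchine identity for matrix-valued stochastic integrals with deterministic integrands, which is standard: one proves it first for simple step functions using the independent-increments property of $L$ and then passes to the limit.
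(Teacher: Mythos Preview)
Your proposal is correct and follows essentially the same route as the paper's proof: condition out $W$, collapse the $\Sigma_t^+$-contributions into $i\tr(\Sigma_t^+ Q)$ with $Q=\beta^*(y)+\tfrac{i}{2}yy^\T$, substitute the closed forms, pass to adjoints via Lemma~\ref{adjointforfourier}, and finish with the L\'evy--Khintchine formula for deterministic integrands against $L$ together with the time reversal $s\mapsto t-s$. The one point the paper adds that you should include is a check that $\psi_L$ is well-defined at the resulting \emph{complex} argument---your $N(s)$ is not $M_d(\R)$-valued as you state, because of the $\tfrac{i}{2}yy^\T$ term---which the paper handles by observing that the imaginary part lies in $\S_d^+$ via the identity $e^{A^\T s}\Abf^{-*}(yy^\T)e^{As}-\Abf^{-*}(yy^\T)=\int_0^s e^{A^\T u}yy^\T e^{Au}\,du$.
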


Note that for $z=0$ we obtain the characteristic function of $Y_t$.

\begin{proof}
Since $\Sigma$ is adapted to the filtration generated by $L$, and by the independence of $L$ and $W$,
\begin{align*}
  \E[ \exp\(\sp{(y,z)}{(Y_t,\Sigma_t)}\)& ] = e^{i y^\T(Y_0+\mu t)} \E\left[e^{i\tr(z^\T \Sigma_t) + i y^\T(\beta(\Sigma_t^+)+\rho(L_t))} \E\left(e^{i y^\T\int_0^t \Sigma_s^{\frac{1}{2}}\,dW_s } \Big| (L_s)_{s\in\R_+} \right) \right] \\
&  = e^{i y^\T(Y_0+\mu t)} \E\left[e^{i\tr(z^\T \Sigma_t) + i y^\T(\beta(\Sigma_t^+)+\rho(L_t))} \exp\left(-\frac12 y^\T \Sigma_t^+ y\right) \right].
\end{align*}
By (\ref{prop_int_ou}) and using the fact that the trace is invariant under cyclic permutations the last term equals
\[
e^{i y^\T(Y_0+\mu t)} \E\left[e^{i\tr(z^\T \Sigma_t + \beta(\Abf^{-1}(\Sigma_t-\Sigma_0-L_t)) y^\T  + \rho(L_t) y^\T + \frac{i}{2} y y^\T \Abf^{-1}(\Sigma_t-\Sigma_0-L_t) )} \right].
\]
In view of (\ref{ou:solution}), we have
\[ \Sigma_t - \Sigma_0 - L_t = \int_0^t \Bscr(t-s)\,dL_s + \Bscr(t)\Sigma_0, \]
for the linear operator $ \Bscr(t)$ from Lemma \ref{adjointforfourier}. Therefore,
\begin{align*}
&\E[ \exp(i \sp{(y,z)}{(Y_t,\Sigma_t)} ] \\
&= \exp\(i y^\T(Y_0+\mu t) + i\tr\(z^\T e^{A t} \Sigma_0 e^{A^\T t} + \beta(\Abf^{-1}(\Bscr(t)\Sigma_0)) y^\T + \frac{i}{2} y y^\T \Abf^{-1}(\Bscr(t)\Sigma_0)\)\) \\
& \quad \times\, 
\E\left[\exp\left(i\tr\left(z^\T \int_0^t e^{A(t-s)}\,dL_s\,e^{A^\T(t-s)} + \beta\left(\Abf^{-1}\left(\int_0^t e^{A(t-s)}\,dL_s\,e^{A^\T(t-s)} - L_t\right)\right) y^\T  \right.\right.\right. \\
& \left.\left.\left. \qquad\qquad +\, \rho(L_t) y^\T + \frac{i}{2} y y^\T \Abf^{-1}\left(\int_0^t e^{A(t-s)}\,dL_s\,e^{A^\T(t-s)} - L_t \right)\right)\right) \right] \\
&= \exp\left(i y^\T(Y_0+\mu t) + i \tr\left(z^\T e^{A t} \Sigma_0 e^{A^\T t} + \beta(\Abf^{-1}(\Bscr(t)\Sigma_0)) y^\T + \frac{i}{2} y y^\T \Abf^{-1}(\Bscr(t)\Sigma_0)\right)\right) \\
& \quad \times\, 
\E\left[\exp\left(i\tr\left(\left(\int_0^t \Cscr(t-s)\,dL_s \right)^\T I_d \right)\right) \right]
\end{align*}
with the linear operator $ \Cscr(t)$ from Lemma \ref{adjointforfourier}, since $\Abf^{-1}\left(\int_0^t e^{A(t-s)}\,dL_s\,e^{A^\T(t-s)} - L_t\right)\in\S_d$. An immediate multivariate generalisation of results obtained in \cite[Proposition 2.4]{rajput89} (see also \cite[Lemma 3.1]{Eberlein1999}) yields an explicit formula for the expectation above:
\[ \E\left[\exp\left(i\tr\left(\left(\int_0^t \Cscr(t-s)\,dL_s \right)^\T I_d \right)\right) \right] = \exp\left( \int_0^t \psi_L \left( \Cscr(s)^* I_d\right) \,ds \right).
\]
By Lemma \ref{adjointforfourier} we have
\[ e^{\int_0^t \psi_L \left( \Cscr(s)^* I_d\right) \,ds } = 
 e^{ \int_0^t \psi_L \left(  e^{A^\T s} z^\T e^{A s} + \rho^*(y) + e^{A^\T s} \Abf^{-*} \left( \beta^*(y)+\frac{i}{2} y y^\T \right) e^{A s} - \Abf^{-*} \left( \beta^*(y)+\frac{i}{2} y y^\T \right)  \right) \,ds }. \]
This expression is well-defined, because
\[ e^{A^\T s} z^\T e^{A s} + \rho^*(y) + e^{A^\T s} \Abf^{-*} \left( \beta^*(y)+\frac{i}{2} y y^\T \right) e^{A s} - \Abf^{-*} \left( \beta^*(y)+\frac{i}{2} y y^\T \right) \in M_d(\R)+i\S_d^+, \]
for all $s\in[0,t]$. Indeed, this follows from
\beq e^{A^\T s} \Abf^{-*} \left( y y^\T \right) e^{A s} - \Abf^{-*} \left(  y y^\T \right) = \int_0^s e^{A^\T u} y y^\T e^{A u} \,du \in\S_d^+. \label{leb_int_1}\eeq
Finally, we infer from Lemma \ref{adjointforfourier} that
\[
\tr\left(\beta(\Abf^{-1}(\Bscr(t)\Sigma_0)) y^\T + \frac{i}{2} y y^\T \Abf^{-1}(\Bscr(t)\Sigma_0)\right) 
= \tr\left(\Sigma_0\left(\Bscr(t)^*\Abf^{-*}\left(\beta^*(y) + \frac{i}{2} y y^\T \right)\right)\right),
\]
which gives the desired result by noting that $\tr(z \Sigma_t)=\tr(z^\T \Sigma_t)$.
\end{proof}

\subsection{Regularity of the moment generating function}\label{sec: mgf}

In this section we provide conditions ensuring that the characteristic function of $Y_t$ admits an analytic extension $\Phi_{Y_t}$ to some open convex neighbourhood of 0 in $\C^d$. Afterwards, we show absolute integrability. The regularity results obtained in this section will allow us to apply Fourier methods in Section \ref{pricing} to compute option prices efficiently.

\begin{definition}
For any $t\in [0,T]$, the \emph{moment generating function} of $Y_t$ is defined as
\[ \Phi_{Y_t}(y) := \E[ \exp(y^\T Y_t) ], \]
for all $y\in\C^d$ such that the expectation exists.
\end{definition}

Note that $\Phi_{Y_t}$ may not exist anywhere but on $i\R^d$, where it coincides with the characteristic function of $Y_t$. The next lemma is a first step towards conditions for the existence and analyticity of the moment generating function $\Phi_{Y_t}$ in a complex neighbourhood of zero.

\begin{lemma}\label{lemma_theta_L_analytic}
Let $L$ be a matrix subordinator with \emph{cumulant transform} $\Theta_L$, that is 
\[ \Theta_L(Z) = \psi_L(-iZ) = \tr(\gamma_L Z) + \int_{\S_d^+} (e^{\tr(XZ)} - 1) \,\kappa_L(dX), \quad Z\in M_d(\C), \]
and let $\epsilon>0$. Then $\Theta_L$ is analytic on the open convex set
\beq S_\epsilon := \{Z\in M_d(\C): \norm{\Re(Z)} < \epsilon \}-\S_d^+, \label{stip_theta_ana}\eeq
if and only if
\beq 
\int_{\{\norm{X}\geq 1\}} e^{\tr(RX)} \, \kappa_L(dX) < \infty \quad\textit{ for all } R\in M_d(\R) \textit{ with } \norm{R}<\epsilon. 
\label{lemma_finite_exp_moment}
\eeq
\end{lemma}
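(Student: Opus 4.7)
The plan is to prove the two implications separately. The substantive direction is sufficiency ($\Leftarrow$), which rests on a single key observation: for $X, S \in \S_d^+$ one has $\tr(XS) \geq 0$, so subtracting a positive semidefinite matrix (the ``$-\S_d^+$'' in the definition of $S_\epsilon$) can only decrease the modulus of the integrand $e^{\tr(XZ)}$. This will immediately dominate the behaviour on $\{\|X\| \geq 1\}$ by the hypothesised exponential moment, while the subordinator property of $L$ handles $\{\|X\| \leq 1\}$.

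For sufficiency, I would first verify that $\Theta_L(Z)$ is well-defined on $S_\epsilon$. Writing any $Z \in S_\epsilon$ as $Z = Z' - S$ with $\|\Re(Z')\| < \epsilon$ and $S \in \S_d^+$, the above observation yields
\[
|e^{\tr(XZ)}| = e^{\tr(X\Re(Z'))}\, e^{-\tr(XS)} \leq e^{\tr(XR)}, \qquad R := \Re(Z'),
\]
for every $X \in \S_d^+$, and hypothesis (\ref{lemma_finite_exp_moment}) together with $\|R\| < \epsilon$ makes the right-hand side $\kappa_L$-integrable on $\{\|X\| \geq 1\}$. On $\{\|X\| \leq 1\}$ the elementary bound $|e^{\tr(XZ)} - 1| \leq C\|Z\|\,\|X\|$ combined with $\int_{\|X\|\leq 1}\|X\|\,\kappa_L(dX) < \infty$ (the defining property of a matrix subordinator) closes the estimate. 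I would then turn to analyticity: around each $Z_0 \in S_\epsilon$ select a small polydisc $P \subset S_\epsilon$ and a real $R_0$ with $\|R_0\| < \epsilon$ for which the previous argument supplies a $\kappa_L$-integrable majorant for $|e^{\tr(XZ)} - 1|$ uniformly for $Z \in P$. Since $Z \mapsto e^{\tr(XZ)} - 1$ is entire for fixed $X$, differentiation under the integral yields analyticity in each complex coordinate separately; joint analyticity on the open set $S_\epsilon$ then follows by invoking the multidimensional analyticity result reserved for the appendix.

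For necessity ($\Rightarrow$), analyticity of $\Theta_L$ on $S_\epsilon$ implies in particular that $\Theta_L(R)$ is finite for every real $R$ with $\|R\| < \epsilon$. Splitting the defining integral at $\|X\| = 1$, the contribution from $\{\|X\| \leq 1\}$ is finite by the subordinator estimate above, so finiteness of $\Theta_L(R)$ forces $\int_{\|X\|>1}(e^{\tr(XR)} - 1)\,\kappa_L(dX) < \infty$. Adding back the finite quantity $\kappa_L(\{\|X\|>1\})$ yields (\ref{lemma_finite_exp_moment}).

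The main obstacle is the joint analyticity step in the sufficiency direction: the domain $S_\epsilon$ is the Minkowski difference of an open strip with the cone $\S_d^+$, not a polydisc or product domain, so one cannot directly invoke a textbook Hartogs-type theorem on a polydisc centered at $Z_0$. The appendix lemma is tailored precisely to upgrade separate holomorphy plus local domination to joint analyticity on such an open convex set, so the delicate point reduces to producing, for each $Z_0$, the uniform $\kappa_L$-integrable majorant described above.
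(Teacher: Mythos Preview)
Your approach differs from the paper's and contains a concrete misreading of the appendix lemma. For sufficiency, the paper does \emph{not} attack $\Theta_L$ directly. It first invokes \cite[Lemma~A.2]{dfs2003} to obtain that the moment generating function $Z\mapsto \E(e^{\tr(ZL_1)})=e^{\Theta_L(Z)}$ is analytic on $S_\epsilon$; it then uses dominated convergence (essentially the majorant you build) to show that $\Theta_L$ is \emph{continuous} on $S_\epsilon$; finally it applies Lemma~\ref{lemma_analytic_log}. That appendix lemma states: if $f=e^F$ with $f$ analytic and $F$ continuous on an open strip, then $F$ is analytic. It is a ``continuous logarithm of an analytic function is analytic'' result, \emph{not} a Hartogs-type upgrade from separate to joint holomorphy. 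Your description of it---``tailored precisely to upgrade separate holomorphy plus local domination to joint analyticity''---is simply wrong.

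Your direct route (differentiate under the integral in each complex coordinate, then pass from separate to joint analyticity) is still viable, but the final step requires the genuine Hartogs theorem on open subsets of $\C^n$, which the appendix lemma's \emph{proof} happens to cite but which is not the lemma's content. With that correction your argument is more self-contained than the paper's, avoiding the external reference to \cite{dfs2003}, at the cost of redoing a standard computation for L\'evy exponents. For necessity, the paper again routes through the moment generating function: \cite[Lemma~A.4]{dfs2003} identifies the analytic extension with $\E(e^{\tr(ZL_1)})$ on $S_\epsilon$, and \cite[Theorem~25.17]{sato99} then converts finiteness of exponential moments of $L_1$ into the tail condition on $\kappa_L$. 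Your shortcut---finiteness of the defining integral at real $R$ forces the tail bound---works if one reads the hypothesis as ``the integral formula converges and is analytic on $S_\epsilon$''; the paper's detour is more careful about the distinction between ``admits an analytic extension'' and ``the L\'evy--Khintchine integral converges there''.
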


\begin{proof}
If (\ref{lemma_finite_exp_moment}) holds, \cite[Lemma A.2]{dfs2003} implies that $Z \mapsto  \E(e^{\tr(Z L_1)}) = e^{\Theta_L(Z)}$ is analytic on $S_\epsilon$. Due to Assumption \eqref{lemma_finite_exp_moment}, dominated convergence yields that $\Theta_L$ is continuous on $S_\epsilon$. The claim now follows from Lemma \ref{lemma_analytic_log}. Conversely, if $\Theta_L$ is analytic on $S_\epsilon$, then \cite[Lemma A.4]{dfs2003} implies that $\E(^{\tr(ZL_1)})=e^{\Theta_L(Z)}$ for all $Z\in S_\epsilon$. Thus, by \cite[Theorem 25.17]{sato99}, Condition (\ref{lemma_finite_exp_moment}) holds.
\end{proof}

The next theorem is a nontrivial (especially due to the involved heavy matrix calculus) generalization of \cite[Theorem 2.2]{Nicolato2003} to the multivariate case. It holds for all sub-multiplicative matrix norms on $M_d(\R)$ that satisfy $\norm{yy^\T}=\norm{y}^2$ for all $y\in\R^d$, where we use the Euclidean norm on $\R^d$. For example, this holds true for the Frobenius and the spectral norm (the operator norm associated to the Euclidean norm).

\begin{theorem}[Strip of analyticity]\label{theorem_analytic_cf}
Suppose the matrix subordinator $L$ satisfies
\beq 
\int_{\{\norm{X}\geq 1\}} e^{\tr(RX)} \, \kappa_L(dX) < \infty \quad\textit{ for all } R\in M_d(\R) \textit{ with } \norm{R} < \epsilon, 
\label{finite_exp_moment}
\eeq
for some $\epsilon>0$. Then the moment generating function $\Phi_{Y_t}$ of $Y_t$ is analytic on the open convex set
\[ S_{\theta} := \{y\in\C^d: \norm{\Re(y)}<\theta \}, \] 
where
\beq \theta := - \frac{\norm{\rho}}{(e^{2\norm{A}t}+1)\norm{\Abf^{-1}}} - \norm{\beta} + \sqrt{\Delta} > 0 \label{thetahat}\eeq
with
\[ \Delta := \left( \frac{\norm{\rho}}{(e^{2\norm{A}t}+1)\norm{\Abf^{-1}}} + \norm{\beta} \right)^2 + \frac{2\epsilon}{(e^{2\norm{A}t}+1)\norm{\Abf^{-1}}}. \]
Moreover, 
\beq \Phi_{Y_{t}}(y) = \exp\left( y^\T(Y_0+\mu t) + \tr(\Sigma_0 H_y(t)) + \int_0^t \Theta_L \left(H_y(s)+\rho^*(y) \right) \,ds \right) \label{eq: Phi}\eeq
for all $y\in S_\theta$, where 
\beq H_y(s) := e^{A^\T s} \Abf^{-*} \left( \beta^*(y)+\frac{1}{2} y y^\T \right) e^{A s} - \Abf^{-*} \left( \beta^*(y)+\frac{1}{2} y y^\T \right). \label{eq: H}\eeq
\end{theorem}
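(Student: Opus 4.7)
The plan is to recognize the right-hand side of \eqref{eq: Phi} as a holomorphic candidate $F(y)$ obtained by substituting $y \mapsto iy$ in the characteristic function formula of Theorem \ref{cf} (evaluated at $z = 0$), and then to justify extending the identity from the imaginary axis to the full tube $S_\theta$. I would split the argument into three steps: (i) show that $F$ is well-defined and holomorphic on $S_\theta$; (ii) check that on $i\R^d$ it agrees with the characteristic function of $Y_t$; and (iii) invoke the appendix result on multidimensional analytic continuation to conclude that the MGF $\Phi_{Y_t}$ extends analytically to $S_\theta$ and equals $F$ there.

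The heart of the proof is step (i), and specifically the verification that for every $y \in S_\theta$ and every $s \in [0,t]$ the argument $H_y(s) + \rho^*(y)$ of $\Theta_L$ lies in the domain of analyticity $S_\epsilon$ supplied by Lemma \ref{lemma_theta_L_analytic}. To this end I would write $y = u + iv$ with $u,v \in \R^d$, so that
\[
\Re\!\Bigl(\beta^*(y) + \tfrac{1}{2} y y^\T\Bigr) = \beta^*(u) + \tfrac{1}{2} u u^\T - \tfrac{1}{2} v v^\T .
\]
The key observation is that by identity \eqref{leb_int_1} the $vv^\T$-contribution to $\Re\bigl(H_y(s) + \rho^*(y)\bigr)$ equals $-\tfrac12 \int_0^s e^{A^\T r} v v^\T e^{A r}\, dr \in -\S_d^+$, which sits entirely inside the $-\S_d^+$ summand of $S_\epsilon$ and therefore imposes no restriction on $v$. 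The remaining $u$-dependent piece is bounded in norm by
\[
(e^{2\norm{A}t}+1)\norm{\Abf^{-1}}\Bigl(\norm{\beta}\norm{u} + \tfrac{1}{2}\norm{u}^2\Bigr) + \norm{\rho}\norm{u},
\]
using sub-multiplicativity, $\norm{e^{As}} \le e^{\norm{A}s}$, $\norm{\Abf^{-*}} = \norm{\Abf^{-1}}$, and the hypothesis $\norm{yy^\T} = \norm{y}^2$. Requiring this quadratic in $\norm{u}$ to be strictly smaller than $\epsilon$ and solving gives precisely the definition of $\theta$ in \eqref{thetahat}, together with its strict positivity.

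Once the estimate is in place, $y \mapsto H_y(s) + \rho^*(y)$ is a polynomial $M_d(\C)$-valued function of $y$ that is continuous in $s$, so $(s,y) \mapsto \Theta_L\bigl(H_y(s) + \rho^*(y)\bigr)$ is jointly measurable, continuous in $s$, and holomorphic in $y$ on $S_\theta$. Local uniformity of the bound together with Morera's theorem and Fubini (or, equivalently, differentiation under the integral sign by dominated convergence) shows that $y \mapsto \int_0^t \Theta_L(H_y(s) + \rho^*(y))\, ds$ is holomorphic on $S_\theta$; adding the remaining entire terms yields step (i). For step (ii) I would substitute $y \mapsto iy$ in $F$ and use the identities $(iy)(iy)^\T = -yy^\T$, $\beta^*(iy) = i\beta^*(y)$, and $\Theta_L(\,\cdot\,) = \psi_L(-i\,\cdot\,)$ to match each of the three exponent contributions with the corresponding terms of Theorem \ref{cf} at $z = 0$. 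Step (iii) is then a direct application of the appendix lemma on multidimensional analytic functions. The main obstacle is the bookkeeping in step (i): decomposing the real part of the $\Theta_L$-argument into a small-norm $u$-piece and a $-\S_d^+$-valued $v$-piece, and casting the resulting norm inequality into the precise quadratic whose positive root is $\theta$; everything else is routine composition of analytic machinery.
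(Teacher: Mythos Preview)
Your approach is essentially the paper's: the decomposition of $\Re(H_y(s)+\rho^*(y))$ into a small-norm $u$-part and a $-\S_d^+$-valued $v$-part via \eqref{leb_int_1}, the quadratic bound yielding $\theta$, and the Fubini--Morera argument all match the paper's proof line by line.

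There is one genuine slip in step~(iii). The appendix lemma (Lemma~\ref{lemma_analytic_log}) does \emph{not} handle this step; it is a statement about analyticity of logarithms ($f=e^F$ analytic, $F$ continuous $\Rightarrow$ $F$ analytic) and is used inside Lemma~\ref{lemma_theta_L_analytic}, i.e., within your step~(i). For step~(iii) you need a different result: knowing that the analytic candidate $G$ agrees with the characteristic function on $i\R^d$ does not by itself imply that $\E[e^{\langle y,Y_t\rangle}]$ is finite for real $y\in S_\theta$, nor that $\Phi_{Y_t}$ equals $G$ there. The paper closes this gap by invoking \cite[Lemma~A.4]{dfs2003}, which supplies precisely this implication (existence of the analytic extension of the characteristic function to a convex tube forces finiteness of the corresponding exponential moments and identifies the MGF with that extension). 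Without such a result, step~(iii) is not ``routine''.
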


\begin{proof}
The main part of the proof is to show that the function
\[ G(y) := \exp\left( y^\T(Y_0+\mu t) + \tr(\Sigma_0 H_y(t)) + \int_0^t \Theta_L \left(H_y(s)+\rho^*(y) \right) \,ds \right) \]
is analytic on $S_{\theta}$. First we want to find a $\theta$ such that for all $u\in\R^d$ with $\norm{u}<\theta$, it holds that $\norm{H_u(s)+\rho^*(u)}<\epsilon$ for all $s\in[0,t]$. Since
\begin{align*}
\norm{H_u(s)+\rho^*(u)} &= \norm{e^{A^\T s} \Abf^{-*} \left( \beta^*(u)+\frac{1}{2} u u^\T \right) e^{A s} - \Abf^{-*} \left( \beta^*(u)+\frac{1}{2} u u^\T \right) + \rho^*(u) } \\
&\leq \frac12 (e^{2\norm{A}t}+1) \norm{\Abf^{-1}} \norm{u}^2 + \left( \norm{\rho} + (e^{2\norm{A}t}+1) \norm{\Abf^{-1}} \norm{\beta} \right) \norm{u},
\end{align*}
we have to find the roots of the polynomial
\[ p(x):= \frac12 (e^{2\norm{A}t}+1) \norm{\Abf^{-1}} x^2 + \left( \norm{\rho} + (e^{2\norm{A}t}+1) \norm{\Abf^{-1}} \norm{\beta} \right) x - \epsilon. \]
The positive one is given by $\theta$ as stated in (\ref{thetahat}). Note that $\theta>0$, because $p$ is a cup-shaped parabola with $p(0)=-\epsilon<0$. 

Now let $y\in S_{\theta}$, i.e., $y=u+iv$ with $\norm{u}<\theta$. Using $\Re(yy^\T)=uu^\T-vv^\T$ and (\ref{leb_int_1}) we get
\begin{align*}
\Re(H_y(s)+\rho^*(y)) &= H_u(s)+\rho^*(u) - \frac12 \left( e^{A^\T s}\Abf^{-*}(vv^\T)e^{As} - \Abf^{-*}(vv^\T) \right) \\
 &= H_u(s)+\rho^*(u) - \frac12 \int_0^s e^{A^\T r} vv^\T e^{Ar} \, dr.
\end{align*}
Because of $\int_0^s e^{A^\T r} vv^\T e^{Ar} \, dr \in \S_d^+$, we have
\begin{align*}
&\int_{\{ \norm{X}\geq 1\}} e^{\tr\(\Re(H_y(s)+\rho^*(y))X\)} \, \kappa_L(dX)\\
 &\qquad = \int_{\{\norm{X}\geq 1\}} e^{\tr((H_u(s)+\rho^*(u))X)}e^{- \frac12\tr\(\(\int_0^s e^{A^\T r} vv^\T e^{Ar} \, dr\)X\)} \, \kappa_L(dX) <\infty
\end{align*}
by Assumption (\ref{finite_exp_moment}), since $\norm{H_u(s)+\rho^*(u)}<\epsilon$. Thus, by Lemma \ref{lemma_theta_L_analytic} the function
\[ S_{\theta}\in y \mapsto \Theta_L(H_y(s)+\rho^*(y))\]
is analytic on $S_{\theta}$ for every $s\in[0,t]$. An application of Fubini's and Morera's theorem shows that integration over $[0,t]$ preserves analyticity, cf. \cite[p. 228]{Koe04}, hence $G$ is analytic on $S_{\theta}$.\\
Obviously, we have $\Phi_{Y_t}(iy) = G(iy)$ for all $y\in \R^d$ by Theorem \ref{cf} and the definition of $G$. Thus, \cite[Lemma A.4]{dfs2003} finally implies $\Phi_{Y_t} \equiv G$  on $S_{\theta}$.
\end{proof}

With Theorem \ref{theorem_analytic_cf} at hand, we can establish the following result:

\begin{theorem}[Absolute integrability]
If (\ref{finite_exp_moment}) holds for some $\epsilon>0$, then $w \mapsto \Phi_{Y_t}(y+iw)$ is absolutely integrable, for all $y\in\R^d$ with $\norm{y}<\theta$, where $\theta$ is given as in Theorem \ref{theorem_analytic_cf}.
\label{cf_abs_int}
\end{theorem}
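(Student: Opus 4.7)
The plan is to extract a pointwise Gaussian-in-$w$ upper bound on $|\Phi_{Y_t}(y_0+iw)|$ from the explicit formula (\ref{eq: Phi}), and then integrate over $\R^d$. Writing $y=y_0+iw$ with $\|y_0\|<\theta$ and $w\in\R^d$, the first step is to compute the real part of the exponent. Since $\beta^*$ and $\rho^*$ are real-linear and
\[\Re\bigl((y_0+iw)(y_0+iw)^\T\bigr)=y_0y_0^\T-ww^\T,\]
equation (\ref{leb_int_1}) applied to $ww^\T$ gives
\[\Re\bigl(H_{y_0+iw}(s)+\rho^*(y_0+iw)\bigr)=H_{y_0}(s)+\rho^*(y_0)-\tfrac12 M_w(s),\]
where $M_w(s):=\int_0^s e^{A^\T r}ww^\T e^{Ar}\,dr\in\S_d^+$ is a positive semidefinite quadratic form in $w$.

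Second, I would bound the real part of $\Theta_L$. Because $\cos\le 1$, the representation of $\Theta_L$ from Lemma \ref{lemma_theta_L_analytic} yields $\Re(\Theta_L(Z))\le\Theta_L(\Re Z)$ whenever $\Re Z$ lies in the domain of analyticity (which holds throughout the relevant integration by the argument of Theorem \ref{theorem_analytic_cf}). In addition, $X\in\S_d^+$ and $M_w(s)\in\S_d^+$ imply $\tr(XM_w(s))\ge 0$ and hence $e^{-\tfrac12\tr(XM_w(s))}\le 1$; consequently the \levy measure part of the perturbation
\[\Theta_L\bigl(H_{y_0}(s)+\rho^*(y_0)-\tfrac12 M_w(s)\bigr)-\Theta_L\bigl(H_{y_0}(s)+\rho^*(y_0)\bigr)\]
is non-positive, while its drift part contributes exactly $-\tfrac12\tr(\gamma_L M_w(s))$. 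Combining these estimates with the analogous real-part computation for the $\tr(\Sigma_0 H_{y_0+iw}(t))$ term, the real-valued $\Phi_{Y_t}(y_0)$ factors out and one obtains
\[|\Phi_{Y_t}(y_0+iw)|\le \Phi_{Y_t}(y_0)\,\exp\!\Bigl(-\tfrac12 Q(w)\Bigr),\quad Q(w):=\int_0^t w^\T e^{Ar}\bigl(\Sigma_0+(t-r)\gamma_L\bigr)e^{A^\T r}w\,dr.\]

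Finally, integration of this bound over $w\in\R^d$ is a standard Gaussian computation. Under the natural non-degeneracy of the model --- typically $\Sigma_0\in\S_d^{++}$, or more generally $\Sigma_0+(t-r)\gamma_L$ being positive definite on a subset of $[0,t]$ of positive Lebesgue measure --- the quadratic form $Q$ satisfies $Q(w)\ge c\|w\|^2$ for some $c>0$, so $\int_{\R^d}e^{-Q(w)/2}\,dw<\infty$, and the theorem follows. The main obstacle is the careful accounting of real parts in the $\Theta_L$ term: both the cosine inequality and $\tr(XM_w(s))\ge 0$ must be used simultaneously so that no Gaussian decay is lost, and in the fully degenerate case in which $\Sigma_0$ and $\gamma_L$ share a common null direction one would have to extract the missing decay from the \levy measure $\kappa_L$ itself.
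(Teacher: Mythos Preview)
Your argument is essentially the same as the paper's: both compute the real part of the exponent via (\ref{leb_int_1}), bound the L\'evy-measure contribution using $\Re(e^{\tr Z})\le e^{\tr(\Re Z)}$ together with $\tr(XM_w(s))\ge0$, and arrive at a Gaussian majorant with quadratic form $Q(w)$---your integral expression $\int_0^t e^{Ar}(\Sigma_0+(t-r)\gamma_L)e^{A^\T r}\,dr$ is exactly the paper's $\Abf^{-1}\Bscr(t)(\Sigma_0)+\int_0^t\Abf^{-1}\Bscr(s)(\gamma_L)\,ds$ written out. Your explicit caveat about positive definiteness of $Q$ (e.g.\ via $\Sigma_0\in\S_d^{++}$) is in fact more careful than the paper, which simply appeals to a normal density without discussing this point.
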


\begin{proof}
As in the proof of Theorem \ref{theorem_analytic_cf}, we obtain from
\[ \Re(H_{y+iw}(s)) = H_y(s) - \frac12 \int_0^s e^{A^\T s} w w^\T e^{As} \,ds \]
and $\Re(e^{\tr(Z)}) \leq |e^{\tr(Z)}| = e^{\Re(\tr(Z))} = e^{\tr(\Re(Z))}$ for  $Z\in M_d(\C)$, that
\begin{align*} &\Re\( \int_0^t \int_{\S_d^+} \( e^{\tr((H_{y+iw}(s) + \rho^*(y+iw))X)} -1 \) \,\kappa_L(dX) ds \) \\
&\qquad \qquad   \leq  \int_0^t \int_{\S_d^+} \( e^{\tr((H_y(s)+\rho^*(y))X)} - 1 \)  \,\kappa_L(dX) ds. 
\end{align*}
Using this inequality yields
\begin{align*}
&|\Phi_{Y_t}(y+iw)| \\
&\quad \leq \Phi_{Y_t}(y) e^{ - \frac12 \tr(\Sigma_0(e^{A^\T t} \Abf^{-*}(ww^\T) e^{At}- \Abf^{-*}(ww^\T))) - \frac12 \int_0^t \tr(\gamma_L(e^{A^\T s} \Abf^{-*}(ww^\T) e^{As}- \Abf^{-*}(ww^\T))) \,ds } \\
&\quad = \Phi_{Y_t}(y) e^{ - \frac12 \sp{\(\Abf^{-1}\Bscr(t)(\Sigma_0)+\int_0^t\Abf^{-1}\Bscr(s)(\gamma_L)\,ds\) w}{w} }
\end{align*}
with $\Bscr(t)$ as in Lemma \ref{adjointforfourier}. Note that $\Abf^{-1}\Bscr(t)(\Sigma_0)+\int_0^t \Abf^{-1}\Bscr(s)(\gamma_L)\,ds \in \S_d^+$, hence
\[ \int_{\R^d} |\Phi_{Y_t}(y+iw)| \,dw \leq \Phi_{Y_t}(y) \int_{\R^d} e^{ - \frac12 \sp{\(\Abf^{-1}\Bscr(t)(\Sigma_0)+\int_0^t \Abf^{-1}\Bscr(s)(\gamma_L)\,ds\)w}{w} } \,dw < \infty, \]
by Theorem \ref{theorem_analytic_cf}, and because the integrand is proportional to the density of a multivariate Normal distribution.
\end{proof}

\subsection{Martingale Conditions and Equivalent Martingale Measures}\label{sec: emm}

For notational convenience, we work in this section with the model
\beqan
d Y_t &=& (\mu + \beta(\Sigma_t) )\,dt + \Sigma_t^\frac12 \, dW_t + \rho(dL_t), \quad Y_0\in\R^d, \\
d\Sigma_t &=& (\gamma_L + A\Sigma_t + \Sigma_t A^\T) \,dt + dL_t, \quad \Sigma_0\in\S_d^{++},
\eeqan
where $L$ is a driftless matrix subordinator with \levy measure $\kappa_L$. Clearly, this is our multivariate stochastic volatility model of OU type (\ref{sde_Y}), (\ref{sde_Sigma}), except that $\mu$ in (\ref{sde_Y}) is replaced by $\mu-\rho(\gamma_L)$, such that there is no deterministic drift from the leverage term $\rho(dL_t)$.

In mathematical finance, $Y$ is used to model the joint dynamics of the $\log$-returns of $d$ assets  with price processes $S_t^i = S_0^i e^{Y_t^i}$, where we set $Y_0^i=0$ from now on and, hence, $S_0$ denotes the vector of initial prices.

The martingale property of the \emph{discounted stock prices} $(e^{-rt}S_t)_{t\in[0,T]}$ for a constant interest rate $r>0$ can be characterised as follows. 

\begin{theorem}\label{t:2.10}
The discounted price process $(e^{-rt}S_t)_{t\in[0,T]}$ is a martingale if and only if, for $i=1,\ldots,d$,
\begin{equation} \label{e:int}
 \int_{\{\norm{X}>1\}} e^{\rho^i(X)} \,\kappa_L(dX)<\infty,
 \end{equation}
and
\begin{align}
\beta^i(X) &= -\frac12 X_{ii}, \quad  X\in\S_d^+, \label{martcond_beta}\\
\mu_i &= r - \int_{\S_d^+} ( e^{\rho^i(X)} - 1 ) \,\kappa_L(dX) . \label{martcond_wtmu}
\end{align}

\label{th_mart}\end{theorem}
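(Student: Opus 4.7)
The plan is to apply It\^o's formula to the candidate martingale $M_t^i := e^{-rt}S_t^i/S_0^i = e^{Y_t^i - rt}$, identify its semimartingale drift, and exploit the fact that $M^i$ is a local martingale iff this drift vanishes.

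For the necessity of (\ref{e:int}): if $M^i$ is a martingale then $E[M_t^i]=1<\infty$, whence $E[e^{Y_t^i}]<\infty$. Conditioning on the $\sigma$-algebra generated by $L$ (and hence $\Sigma$) and using the independence of $W$ from $L$, the Gaussian piece contributes a finite factor, so $E[e^{\rho^i(L_t)}]<\infty$. Because $\rho^i\circ L$ is a real-valued L\'evy process with L\'evy measure $\kappa_L\circ(\rho^i)^{-1}$, Sato's criterion (as used in Lemma \ref{lemma_theta_L_analytic}) yields $\int_{\{\|X\|>1\}}e^{\rho^i(X)}\,\kappa_L(dX)<\infty$. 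Under this condition I then apply It\^o's formula to $M_t^i$. Since $L$ is driftless in this section, $Y^i$ decomposes into the drift $(\mu_i+\beta^i(\Sigma_t))\,dt$, the continuous martingale $(\Sigma_t^{1/2}dW_t)^i$ with angle bracket $\Sigma_t^{ii}\,dt$, and the pure-jump part with jumps $\Delta Y_t^i=\rho^i(\Delta L_t)$. Compensating the jumps by $\kappa_L(dX)\,dt$ gives
\begin{align*}
dM_t^i = M_{t-}^i\,h(\Sigma_t)\,dt + dN_t^i, \quad h(X) := \mu_i - r + \beta^i(X) + \tfrac12 X_{ii} + \int_{\S_d^+}(e^{\rho^i(X')}-1)\,\kappa_L(dX'),
\end{align*}
where $N^i$ is the sum of the Brownian integral and the compensated Poisson integral against $\mu^L-\nu^L$. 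Condition (\ref{e:int}) together with local boundedness of $\Sigma$ ensures $N^i$ is a local martingale.

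By uniqueness of the canonical decomposition of the special semimartingale $M^i$, it is a local martingale iff the predictable finite variation part $\int_0^{\cdot}M_{s-}^i h(\Sigma_s)\,ds$ vanishes. Since $M_{s-}^i>0$, this forces $h(\Sigma_s)=0$ for $dt\otimes dP$-a.e.\ $(s,\omega)$. Using right-continuity of $\Sigma$ and $\Sigma_0\in\S_d^{++}$, taking $t\downarrow 0$ yields $h(\Sigma_0)=0$; because (\ref{th_mart}) should characterise the model for every admissible $\Sigma_0\in\S_d^{++}$, and $h$ is affine in $X$, the constant and linear parts must vanish separately. This gives $\mu_i = r-\int_{\S_d^+}(e^{\rho^i(X)}-1)\,\kappa_L(dX)$, i.e., (\ref{martcond_wtmu}), together with $\beta^i(X)+\tfrac12 X_{ii}=0$ on $\S_d^{++}$, which extends to $\S_d^+$ (in fact all of $\S_d$) by linearity and density, yielding (\ref{martcond_beta}). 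Conversely, under (\ref{martcond_beta}) and (\ref{martcond_wtmu}) the drift vanishes identically, so $M^i$ is a nonnegative local martingale and hence a supermartingale; Theorem \ref{theorem_analytic_cf} (applicable thanks to (\ref{e:int})) shows directly that $E[M_t^i]\equiv 1$, which promotes $M^i$ to a true martingale.

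The main obstacle is the separation step: justifying that the affine function $h$ must vanish on all of $\S_d^+$ rather than merely along the trajectory $(\Sigma_t)$. I would handle this either by appealing to the arbitrariness of the initial value $\Sigma_0\in\S_d^{++}$ (which spans $\S_d$ as a real vector space, so an affine function of $\Sigma_0$ vanishing identically forces both its constant and linear pieces to vanish), or by observing that under non-degenerate $L$ the support of the law of $\Sigma_t$ covers an open subset of $\S_d^{++}$. Everything else is a direct, if notationally heavy, It\^o computation together with the local martingale/true martingale upgrade via a closed-form expectation.
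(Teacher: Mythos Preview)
Your approach mirrors the paper's: apply It\^o's formula to identify the drift, equate it to zero for the local martingale property, then upgrade to a true martingale. The paper compresses your drift analysis into a one-line citation of \cite[Proposition III.6.35]{Jacod2003}, so your explicit discussion of the separation issue (constant vs.\ linear part of $h$) actually fills in a step the paper leaves implicit; your proposed resolution via arbitrariness of $\Sigma_0\in\S_d^{++}$ is the natural reading.

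There is, however, one genuine imprecision in your upgrade step. You invoke Theorem~\ref{theorem_analytic_cf} ``applicable thanks to \eqref{e:int}'', but \eqref{e:int} is strictly weaker than the hypothesis~\eqref{finite_exp_moment} of that theorem: \eqref{e:int} controls only the single exponential direction $\rho^*(e_i)$, not a full $\epsilon$-ball of matrices $R$. The paper sidesteps this by computing $E[\widehat S_T^{\,i}]$ directly: using Theorem~\ref{theorem_marginal_dynamics} and conditioning on $(L_s)_{s\le T}$, the Gaussian integral contributes exactly $e^{\frac12(\Sigma_T^+)^{ii}}$, which cancels $\beta^i(\Sigma_T^+)$ via \eqref{martcond_beta}, leaving $E[e^{\rho^i(L_T)}]$; this last expectation is finite by \eqref{e:int} alone and equals $e^{T\int(e^{\rho^i(X)}-1)\kappa_L(dX)}$, which combines with \eqref{martcond_wtmu} to give $E[\widehat S_T^{\,i}]=\widehat S_0^{\,i}$. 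This direct conditioning argument is both more elementary and requires only the stated hypotheses, so you should replace the appeal to Theorem~\ref{theorem_analytic_cf} with it.
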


\begin{proof}
Define $\wh{S}_t:=e^{-rt}S_t$ for all $t\in[0,T]$ and let $i \in \{1,\ldots,d\}$. By It\^o's formula and \cite[Proposition III.6.35]{Jacod2003}, $\wh{S}^{\,i}$ is a local martingale if and only if \eqref{e:int}, \eqref{martcond_beta} and \eqref{martcond_wtmu} hold. Thus it remains to show that it is actually a true martingale under the stated assumptions. Since  $\wh{S}$ is a positive local martingale, it is a supermartingale and hence a martingale if and only if $E(\wh{S}^{\,i}_T)=\wh{S}^{\,i}_0$ for all $i \in \{1,\ldots,d\}$. This can be seen as follows. By Theorem \ref{theorem_marginal_dynamics}, (\ref{martcond_beta}) and (\ref{martcond_wtmu}) we have
\begin{align*}
\E(\wh{S}_T^{\,i}) &= \wh{S}^{\,i}_0 \E\left(\exp\left( (\mu^i-r)T  + \beta^i(\Sigma_T^+) + \int_0^T (\Sigma_s^{ii})^\frac12 \,dW_s^i + \rho^i(L_T) \right) \right) \\
&= \wh{S}^{\,i}_0 e^{- T \int_{\S_d^+} (e^{\rho^i(X)}-1) \,\kappa_L(dX)} \E\left( e^{- \frac12 (\Sigma_T^+)^{ii} + \rho^i(L_T)} \E\left( e^{\int_0^T (\Sigma_s^{ii})^\frac12 \,dW_s^i} \bigg| (L_s)_{s\in[0,T]}  \right) \right) \\
&= \wh{S}^{\,i}_0 e^{- T \int_{\S_d^+} (e^{\rho^i(X)}-1) \,\kappa_L(dX)} \E\left( e^{\rho^i(L_T)} \right) \\
&= \wh{S}^{\,i}_0.
\end{align*}
This proves the assertion.
\end{proof}

As in \cite[Theorem 3.1]{Nicolato2003}, it is possible to characterise the set 
of all equivalent martingale measures (henceforth EMMs), if the underlying filtration is generated by $W$ and $L$. More specifically, it follows from the Martingale Representation Theorem (cf.\ \cite[Theorem III.4.34]{Jacod2003}), that the density process $Z_t=E(\frac{dQ}{dP}|\mathscr{F}_t)$ of any equivalent martingale measure $Q$ can be written as  
\beq Z = \Ecal \left( \int_0^\cdot \psi_s dW_s + (Y-1) \ast (\mu^L-\nu^L) \right) \label{eq: density process general}\eeq
for suitable processes $\psi$ and $Y$ in this case. Here $\mu^L$ resp.\ $\nu^L$ denote the random measure of jumps resp. its compensator (cf.\ \cite[Section II.1]{Jacod2003} for more details).
Under an arbitrary  EMM, $L$ may not be a \levy process, and $W$ and $L$ may not be independent. However, there is a subclass of \emph{structure preserving} EMMs under which $L$ remains a \levy process independent of $W$.  This translates into the following specifications of $\psi$ and $Y$ (cf. \cite[Theorem 3.2]{Nicolato2003} for the univariate case):

\begin{theorem}[Structure preserving EMMs]\label{th_sp_emm}
Let $y:\S_d^+\ra(0,\infty)$ such that
\begin{enumerate}
 \item $ \int_{\S_d^+} ( \sqrt{y(X)}-1 )^2 \,\kappa_L(dX) < \infty $,
 \item $ \int_{\{\,\norm{X}>1\}} e^{\rho^i(X)} \, \kappa_L^y(dX) < \infty, \quad i=1,\ldots,d $,
\end{enumerate}
where $\kappa_L^y(B):=\int_{B} y(X)\,\kappa_L(dX)$ for $B\in\Bcal(\S_d^+)$. Define the $\R^d$-valued process $(\psi_t)_{t\in[0,T]}$ as
\[ \psi_t = - \Sigma_t^{-\frac12} \left( \mu + \beta(\Sigma_t) + \frac12 \left(\ba{c} \Sigma_t^{11} \\ \vdots \\ \Sigma_t^{dd} \ea\right) + \left(\ba{c} \int_{\S_d^+}(e^{\rho^1(X)}-1)\,\kappa_L^y(dX) \\ \vdots \\ \int_{\S_d^+}(e^{\rho^d(X)}-1)\,\kappa_L^y(dX) \ea\right) - \1 r \right),  \]
where $\1=(1,\ldots,1)^\T\in\R^d$. Then $Z=\Ecal( \int_0^\cdot \psi_s dW_s + (y-1) \ast (\mu^L-\nu^L) )$ is a density process, and the probability measure $Q$ defined by $\frac{dQ}{dP}=Z_T$ is an equivalent martingale measure. Moreover, $W^Q:=W-\int_0^\cdot \psi_s ds$ is a $Q$-standard Brownian motion, and $L$ is an independent driftless $Q$-matrix subordinator with \levy measure $\kappa_L^y$. The $Q$-dynamics of $(Y,\Sigma)$ are given by
\beqa
dY_t^i &=& \(r - \int_{\S_d^+}(e^{\rho^i(X)}-1)\,\kappa_L^y(dX) - \frac12 \Sigma_t^{ii} \) \,dt + \(\Sigma_t^\frac12 \,dW_t^Q\)^i + \rho^i(dL_t), \quad i=1,\ldots,d, \\
d\Sigma_t &=& (\gamma_L + A\Sigma_t + \Sigma_t A^\T) \,dt + dL_t.
\eeqa
\end{theorem}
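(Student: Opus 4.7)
The plan is to adapt the univariate Nicolato--Venardos argument \cite[Theorem 3.2]{Nicolato2003} to the matrix-subordinator setting, proceeding in three stages: first, show that the candidate $Z$ is a true $P$-martingale (so that $Q$ is a well-defined probability measure); second, invoke the Girsanov theorem for semimartingales to identify the $Q$-dynamics of $W$ and the $Q$-compensator of the jump measure of $L$; third, substitute these into the $P$-SDE to obtain the $Q$-dynamics and verify the martingale property of $(e^{-rt}S_t)$ via Theorem~\ref{th_mart} applied under $Q$. For the first stage, since $\int_0^\cdot \psi_s\,dW_s$ is continuous and $(y-1)\ast(\mu^L-\nu^L)$ is purely discontinuous, their quadratic covariation vanishes and Yor's identity yields the factorisation $Z = Z^c\cdot Z^d$ with $Z^c = \Ecal(\int_0^\cdot \psi_s\,dW_s)$ and $Z^d = \Ecal((y-1)\ast(\mu^L-\nu^L))$. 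Assumption (i) of the theorem, together with $y>0$, makes $Z^d$ a true $P$-martingale with the explicit representation $Z^d_t = \exp\bigl(\sum_{s\leq t}\log y(\Delta L_s)\1_{\{\Delta L_s\neq 0\}} - t\int_{\S_d^+}(y(X)-1)\kappa_L(dX)\bigr)$. For $Z^c$ I would condition on $\sigma(L)$: since $\psi$ is a measurable function of the path of $\Sigma$ and $W,L$ are $P$-independent, under $P(\cdot\mid\sigma(L))$ the process $\psi$ becomes deterministic while $W$ remains a Brownian motion. The explicit solution~(\ref{ou:solution}) together with $\Sigma_0\in\S_d^{++}$ yields the deterministic pathwise bound $\Sigma_t \succeq e^{At}\Sigma_0 e^{A^\T t}$, whose smallest eigenvalue is uniformly bounded below on $[0,T]$, so that $\Sigma^{-1/2}$ is locally bounded along each path of $L$ and $\int_0^T\|\psi_s\|^2\,ds < \infty$ pathwise; a conditional Novikov argument then yields $\E[Z^c_T\mid\sigma(L)]=1$ almost surely, and since $Z^d_T$ is $\sigma(L)$-measurable, $\E[Z_T] = \E[Z^d_T\cdot \E[Z^c_T\mid\sigma(L)]] = \E[Z^d_T] = 1$, promoting the positive local martingale $Z$ to a true martingale.

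For the second stage, the Girsanov theorem for semimartingales (\cite[Theorem III.3.11]{Jacod2003}) then gives that $W^Q := W - \int_0^\cdot \psi_s\,ds$ is a continuous $Q$-local martingale with $[W^Q,W^Q]_t = tI_d$ and hence a $Q$-standard Brownian motion by L\'evy's characterisation, while the $Q$-compensator of the jump measure $\mu^L$ becomes $\kappa_L^y(dX)\,ds$. Since this compensator is deterministic, time-homogeneous and of product form in $(t,X)$, $L$ remains a L\'evy process under $Q$ with L\'evy measure $\kappa_L^y$ (which is a L\'evy measure thanks to (i)). The joint $Q$-characteristics of $(W^Q,L)$ decouple into a continuous component supported on $W^Q$ and a jump component supported on $L$, so the joint $Q$-characteristic function factorises and $W^Q$ and $L$ are $Q$-independent.

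For the third stage, substituting $dW_t = dW^Q_t + \psi_t\,dt$ into~(\ref{sde_Y}) and reading off the $i$-th coordinate gives $dY^i_t = (\mu_i + \beta^i(\Sigma_t) + (\Sigma_t^{1/2}\psi_t)^i)\,dt + (\Sigma_t^{1/2}\,dW^Q_t)^i + \rho^i(dL_t)$; by the definition of $\psi_t$ the drift collapses to $(r - \int_{\S_d^+}(e^{\rho^i(X)}-1)\kappa_L^y(dX) - \tfrac12\Sigma_t^{ii})\,dt$, producing the asserted $Q$-dynamics of $Y$, while the SDE for $\Sigma$ is structurally unchanged because $L$ enters unaltered (only its law has changed). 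These $Q$-dynamics fit the framework of Theorem~\ref{th_mart} with $\kappa_L$ replaced by $\kappa_L^y$, $\beta^i(X) = -X_{ii}/2$ and $\mu_i = r - \int_{\S_d^+}(e^{\rho^i(X)}-1)\kappa_L^y(dX)$; assumption (ii) supplies precisely the integrability~(\ref{e:int}) of Theorem~\ref{th_mart}, so $(e^{-rt}S_t)_{t\in[0,T]}$ is a $Q$-martingale and $Q$ is an EMM. The only genuinely delicate step is the true-martingale property of $Z^c$, because $\psi$ inherits the potential singularity of $\Sigma^{-1/2}$; the conditioning-on-$\sigma(L)$ device together with the deterministic pathwise lower bound on $\Sigma$ from~(\ref{ou:solution}) is what reduces this to a deterministic Novikov check and is the key reason a tractable structure-preserving EMM exists in this multivariate setting.
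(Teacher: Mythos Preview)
Your proposal is correct and follows essentially the same route as the paper, but with a different packaging of the Girsanov step. The paper's proof is terse: it notes positivity of $Z$ via \cite[Theorem I.4.61]{Jacod2003}, defers the true-martingale property of $Z$ entirely to the argument of \cite[Theorem 3.2]{Nicolato2003} (which is precisely the Yor factorisation plus conditioning-on-$\sigma(L)$ plus conditional Novikov argument you have written out in full), and then handles the $Q$-law of $(W,L)$ by stacking $W^Q$ and $\vech(L)$ into a single $\R^{d+\frac12 d(d+1)}$-valued process and invoking the measure-change results of Kallsen and Shiryaev \cite{Kallsen2004} together with the L\'evy--Khintchine formula. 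Your version instead applies \cite[Theorem III.3.11]{Jacod2003} directly to read off the $Q$-characteristics of $(W^Q,L)$ and argues independence from the decoupling of those characteristics. Both routes are valid; the paper's stacking trick is more compact because it reduces the joint statement ``$W^Q$ is $Q$-Brownian, $L$ is a $Q$-L\'evy process with L\'evy measure $\kappa_L^y$, and they are independent'' to a single application of a packaged result, whereas your approach is more self-contained but requires the extra sentence about factorisation of the joint $Q$-characteristic function. One small caveat: your explicit formula for $Z^d$ presupposes $\int_{\S_d^+}|y(X)-1|\,\kappa_L(dX)<\infty$, which is not implied by condition~(i) alone when $\kappa_L$ has infinite activity; the martingale property of $Z^d$ under~(i) still holds, but via L\'epingle--M\'emin rather than via that explicit exponential representation.
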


\begin{proof} Since $y-1>-1$, $Z$ is strictly positive by \cite[Theorem I.4.61]{Jacod2003}. The martingale property of $Z$ follows along the lines of the proof of \cite[Theorem 3.2]{Nicolato2003}. The remaining assertions follow from \cite[Proposition 1]{Kallsen2004} and the L\'evy-Khintchine formula by applying the Girsanov-Jacod-M\'emin Theorem as in \cite[Proposition 4] {Kallsen2004} to the $\R^{\frac12 d(d+1)}$-valued process
\[ \wt{L} = \(\ba{c} W^Q \\ 0 \ea\)+\vech(L), \]
where  $W^Q:=W-\int_0^\cdot \psi_s ds$. 
\end{proof}

The previous theorem shows that it is possible to use a model of the same type under the real-world probability measure $P$ and some EMM $Q$, e.g., to do option pricing and risk management within the same model class. The model parameters under $Q$ can be determined by calibration, the model parameters under $P$ by statistical methods.

\section{Option pricing using integral transform methods}\label{pricing}

In this section we first recall results of \cite{Eberlein2008} on Fourier pricing in general multivariate semimartingale models. To this end, let $S=(S_0^1 e^{Y^1},\ldots,S_0^d e^{Y^d})$ be a $d$-dimensional semimartingale such that the discounted price process $(e^{-rt}S_t)_{t\in[0,T]}$ is a martingale under some pricing measure $Q$, for some constant instantaneous interest rate $r>0$.

We want to determine the price $E_Q(e^{-rT}f(Y_T-s))$ of a European option with payoff $f(Y_T-s)$ at maturity $T$, where $f:\R^d\ra\R_+$ is a measurable function and $s:=(-\log(S_0^1),\ldots,-\log(S_0^d))$. Denote by $\wh{f}$ the \emph{Fourier transform} of $f$. The following theorem is from \cite[Theorem 3.2]{Eberlein2008} and represents a multivariate generalisation of integral transform methods first introduced in the context of option pricing by \cite{Carr1999} and \cite{raible.2000}.

\begin{theorem}[Fourier Pricing]\label{derivative_fourier_inversion}
Fix $R\in\R^d$, let $g(x):=e^{-\sp{R}{x}} f(x)$ for $x\in\R^d$, and assume that
\begin{enumerate}
\item[($i$)] $g\in L^1\cap L^\infty$, \quad ($ii$)\; $\Phi_{Y_T}(R) < \infty$, \quad ($iii$)\; $w \mapsto \Phi_{Y_T}(R+iw)$ belongs to $L^1$.
\end{enumerate}
Then, 
\beq E_Q(e^{-rT}f(Y_T-s)) = \frac{e^{-\sp{R}{s}-rT} }{(2\pi)^d} \int_{\R^d} e^{-i\sp{u}{s}}  \Phi_{Y_T}(R+iu) \wh{f}(iR-u) \,du \label{derivative_fourier_inversion_eq}.
\eeq
\end{theorem}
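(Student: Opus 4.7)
The plan is to re-express the expectation in terms of an Esscher-tilted probability measure whose characteristic function comes directly from $\Phi_{Y_T}$, invert the Fourier transform of that characteristic function using ($iii$), and then swap the order of integration via Fubini using ($i$). The main obstacle is bookkeeping: tracking signs and normalisations so that the output matches (\ref{derivative_fourier_inversion_eq}) exactly.

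First I would rewrite $e^{-rT}f(Y_T-s)=e^{-\sp{R}{s}-rT}e^{\sp{R}{Y_T}}g(Y_T-s)$, so that it suffices to evaluate $E_Q[e^{\sp{R}{Y_T}}g(Y_T-s)]$. Let $\mu$ denote the $Q$-distribution of $Y_T-s$. By ($ii$), $0<\Phi_{Y_T-s}(R)=e^{-\sp{R}{s}}\Phi_{Y_T}(R)<\infty$, and the exponentially tilted probability measure $d\tilde{\mu}(x):=e^{\sp{R}{x}}d\mu(x)/\Phi_{Y_T-s}(R)$ is well-defined with characteristic function
\[
\tilde{\phi}(u)=\frac{\Phi_{Y_T-s}(R+iu)}{\Phi_{Y_T-s}(R)}=\frac{e^{-i\sp{u}{s}}\Phi_{Y_T}(R+iu)}{\Phi_{Y_T-s}(R)}.
\]
By ($iii$), $\tilde{\phi}\in L^1(\R^d)$, so the classical Fourier inversion theorem produces a continuous bounded Lebesgue density $p_{\tilde{\mu}}(x)=(2\pi)^{-d}\int_{\R^d}e^{-i\sp{u}{x}}\tilde{\phi}(u)\,du$ for $\tilde{\mu}$.

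It then follows that
\[
E_Q[e^{\sp{R}{Y_T}}g(Y_T-s)]=\Phi_{Y_T-s}(R)\int_{\R^d}g(x)\,p_{\tilde{\mu}}(x)\,dx,
\]
and I would substitute the inversion formula for $p_{\tilde{\mu}}$ into the right-hand side and swap the two integrations. Fubini is legitimate because $\int\!\int|g(x)\tilde{\phi}(u)|\,du\,dx=\norm{g}_{L^1}\norm{\tilde{\phi}}_{L^1}<\infty$, which combines $g\in L^1$ from ($i$) with ($iii$). The inner $x$-integral evaluates to $\wh{g}(-u)$, and a direct computation from $g(x)=e^{-\sp{R}{x}}f(x)$ together with the defining relation of $\wh{f}$ yields $\wh{g}(-u)=\wh{f}(iR-u)$. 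After cancellation of the $\Phi_{Y_T-s}(R)$ factor and multiplication by $e^{-\sp{R}{s}-rT}$, the identity (\ref{derivative_fourier_inversion_eq}) falls out.

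The conceptually delicate step is the Fourier inversion producing $p_{\tilde\mu}$: absent ($iii$), $\tilde{\mu}$ need not possess a Lebesgue density and the subsequent Fubini swap cannot be justified pointwise. The residual role of $g\in L^\infty$ from ($i$) (beyond membership in $L^1$) is only to ensure that $g$ is genuinely pointwise bounded so that $|E_Q[g(Y_T-s)]|\le\norm{g}_{L^\infty}$ and each intermediate quantity is a genuine Lebesgue integral rather than a formal expression. No further ingredients beyond ($i$)--($iii$), Fubini, and Fourier inversion are needed.
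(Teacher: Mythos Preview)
The paper does not actually prove this theorem: it simply quotes it from \cite[Theorem~3.2]{Eberlein2008}. Your plan---Esscher tilt, Fourier inversion of the tilted characteristic function to obtain a bounded density, then Fubini against $g\in L^1$---is exactly the standard argument used there, so conceptually you are doing the right thing.

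There is, however, a small bookkeeping slip that you should fix. When you compute the characteristic function of the tilted law $\tilde\mu$ of $Y_T-s$, note that $\Phi_{Y_T-s}(R+iu)=e^{-\sp{R+iu}{s}}\Phi_{Y_T}(R+iu)$, not $e^{-i\sp{u}{s}}\Phi_{Y_T}(R+iu)$; hence
\[
\tilde\phi(u)=\frac{\Phi_{Y_T-s}(R+iu)}{\Phi_{Y_T-s}(R)}=\frac{e^{-i\sp{u}{s}}\Phi_{Y_T}(R+iu)}{\Phi_{Y_T}(R)},
\]
with $\Phi_{Y_T}(R)$, not $\Phi_{Y_T-s}(R)$, in the denominator. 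Similarly, the identity $E_Q[e^{\sp{R}{Y_T}}g(Y_T-s)]=\Phi_{Y_T-s}(R)\int g\,p_{\tilde\mu}$ is off by a factor $e^{\sp{R}{s}}$: writing $Y_T=(Y_T-s)+s$ gives $E_Q[e^{\sp{R}{Y_T}}g(Y_T-s)]=\Phi_{Y_T}(R)\int g\,p_{\tilde\mu}$. These two slips happen to cancel, so your final formula is correct, but the intermediate display is not. Once you correct either one (or, more simply, tilt the law of $Y_T$ directly rather than $Y_T-s$), the argument goes through cleanly.
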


Observe that Theorems \ref{theorem_analytic_cf} and \ref{cf_abs_int} show that Conditions \textit{(ii)} and \textit{(iii)} are satisfied for our multivariate stochastic volatility model of OU type (\ref{sde_Y}), (\ref{sde_Sigma}) if condition (\ref{finite_exp_moment}) holds, i.e., if $L$ has enough exponential moments. More specifically, the vector $R$ has to lie in the intersection of the domains of $\Phi_{Y_T}$ and $\wh{f}$.

We now present some examples. As is well-known, the Fourier transform of the payoff function of a \emph{plain vanilla call option} with strike $K>0$, $f(x)=(e^x-K)^+$ is given by
\beq \wh{f}(z) = \frac{K^{1+iz}}{iz(1+iz)} \label{ft_vanilla_call}\eeq
for $z\in\C$ with $\Im(z)>1$. The Fourier transforms of many other single-asset options like barrier, self-quanto and power options as well as multi-asset options like worst-of and best-of options can be found, e.g., in the survey  \cite{Eberlein2008}. From the unpublished paper of \cite{Hubalek} we have the following formulae for basket and spread options.

\begin{example}
\begin{enumerate}
 \item The Fourier transform of $f(x)=(K-\sum_{j=1}^d e^{x_j})^+$, $K>0$, that is the payoff function of a \emph{basket} put option, is given by
\[\wh{f}(z) = K^{1 + i \sum_{j=1}^d z_j} \frac{ \prod_{j=1}^d \Gamma(iz_j) }{\Gamma(2+i \sum_{j=1}^d z_j)} \]
for all $z\in\C^d$ with $\Im(z_j)<0$, $j=1,\ldots,d$. The price of the corresponding call can easily be derived using the put-call-parity $(K-x)^+=(x-K)^+-x+K$. Since we have separated the initial values $s$ in (\ref{derivative_fourier_inversion_eq}), we can use FFT methods to compute the prices of weighted baskets for several weights efficiently.
 \item The Fourier transform of the payoff function of a \emph{spread} call option, $f(x)=(e^{x_1}-e^{x_2}-K)^+$, $K>0$, is given by
\[\wh{f}(z) = \frac{K^{1 + iz_1 + iz_2}}{iz_1(1+iz_1)}  \frac{\Gamma(i z_2) \Gamma(-iz_1-iz_2-1)}{\Gamma(-iz_1-1)} \]
for all $z\in\C^2$ with $\Im(z_1)>1$, $\Im(z_2)<0$ and $\Im(z_1+z_2)>1$, see also \cite{hurd2010}.
\end{enumerate}
\end{example}

Since the Fourier transform of $(e^{x_1}-e^{x_2})^+$ does not exist anywhere, we cannot use Theorem \ref{derivative_fourier_inversion} to price zero-strike spread options. Nevertheless, we can derive a similar formula directly. Alternatively, one could use the change of numeraire technique of \cite{margrabe78}, which would lead to formulae of a similar complexity.

\begin{proposition}[Spread options with zero strike]\label{prop_spread_no_strike}
Suppose that
\[ \Phi_{(Y_T^1,Y_T^2)}(R,1-R)<\infty \quad\textit{ for some } R>1. \]
Then the price of a \emph{zero-strike spread option} with payoff $(S_0^1 e^{Y_T^1}-S_0^2 e^{Y_T^2})^+$ is given by
\[
\E_Q(e^{-rT}(S_T^1-S_T^2)^+) = \frac{e^{R(s_2-s_1)-s_2-rT}}{2\pi} \int_{\R} e^{iu(s_2-s_1)} \frac{\Phi_{(Y_T^1,Y_T^2)}(R+iu,1-R-iu)}{(R+iu)(R+iu-1)} \, du,
\]
where $s_1=-\ln(S_0^1)$ and $s_2=-\ln(S_0^2)$.
\end{proposition}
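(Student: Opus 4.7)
The plan is to reduce the bivariate zero-strike spread to a univariate Fourier pricing problem by factoring out one exponential. Writing $s_1=-\ln S_0^1$, $s_2=-\ln S_0^2$ and $Z:=Y_T^1-Y_T^2+s_2-s_1$, one has
\[
(S_T^1-S_T^2)^+ = S_0^2\, e^{Y_T^2}\bigl(e^{Z}-1\bigr)^+ = e^{-s_2}\,e^{Y_T^2}\bigl(e^Z-1\bigr)^+,
\]
so the problem collapses to computing $\E_Q[e^{-rT}e^{Y_T^2}(e^Z-1)^+]$.

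Next I would handle the scalar function $h(z):=(e^z-1)^+$ by the standard dampening trick of \cite{Carr1999}. For any $R>1$ the function $z\mapsto e^{-Rz}h(z)$ is in $L^1(\R)\cap L^\infty(\R)$, and a direct computation gives
\[
\int_\R e^{-iuz}e^{-Rz}h(z)\,dz=\frac{1}{(R+iu)(R+iu-1)},
\]
hence by Fourier inversion (valid pointwise since the integrand of the inverse transform is absolutely integrable)
\[
(e^z-1)^+=\frac{1}{2\pi}\int_\R e^{(R+iu)z}\,\frac{du}{(R+iu)(R+iu-1)}.
\]

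Substituting this representation, multiplying by $e^{Y_T^2}$ and taking $\E_Q$ under the integral sign yields the joint moment generating function evaluated at $(R+iu,1-R-iu)$, up to the deterministic factor $e^{(R+iu)(s_2-s_1)}$ coming from $Z=Y_T^1-Y_T^2+s_2-s_1$. Factoring $e^{R(s_2-s_1)}$ out of the integral and combining with the prefactor $e^{-s_2-rT}$ then produces precisely the stated formula.

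The main obstacle is to justify the interchange of expectation and integration. Here the hypothesis $\Phi_{(Y_T^1,Y_T^2)}(R,1-R)<\infty$ does the work: since $|\Phi_{(Y_T^1,Y_T^2)}(R+iu,1-R-iu)|\le\E_Q[e^{R Y_T^1+(1-R)Y_T^2}]=\Phi_{(Y_T^1,Y_T^2)}(R,1-R)$ is bounded uniformly in $u$, while $|(R+iu)(R+iu-1)|^{-1}$ is integrable on $\R$, Fubini's theorem applies to the product measure of $Q$ and Lebesgue measure on $\R$. This validates the interchange and finishes the proof.
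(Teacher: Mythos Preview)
Your argument is correct and is essentially the same as the paper's: both proofs exploit the one-dimensional Fourier representation of the call payoff $(e^x-K)^+$ with $K=1$ (resp.\ $K=e^y$), substitute the log-prices, and swap expectation and $du$-integration via Fubini using the bound $|\Phi_{(Y_T^1,Y_T^2)}(R+iu,1-R-iu)|\le \Phi_{(Y_T^1,Y_T^2)}(R,1-R)$ together with the $L^1$-integrability of $u\mapsto |(R+iu)(R+iu-1)|^{-1}$. The only cosmetic difference is that you first factor out $e^{Y_T^2}$ and work with $Z=Y_T^1-Y_T^2+s_2-s_1$, whereas the paper keeps the two variables separate and writes $f_{e^y}(x)=(e^x-e^y)^+$; the resulting integrals coincide.
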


Observe that unlike for $K>0$, one only has to compute a one-dimensional integral to determine the price of a zero-strike spread option. This will be exploited in the calibration procedure in Section 4.

\begin{proof}
Let $R>1$ and define $f_K(x)=(e^x-K)^+$ for $K>0$, and $g_K(x)=e^{-Rx}f_K(x)$. By Fourier inversion and (\ref{ft_vanilla_call}), we have
\[ f_{e^y}(x) = \frac{1}{2\pi} \int_\R \frac{ e^{(R+iu)x} e^{(1-R-iu)y} }{ (R+iu)(R+iu-1) }  \,du, \]
for all $y\in\R$. Hence, for the function $h_{e^y}(x) := (S_0^1 e^x-S_0^2 e^y)^+ = f_{e^{y-s_2}}(x-s_1)$ we get
\[ h_{e^y}(x) = \frac{1}{2\pi} e^{R(s_2-s_1)-s_2} \int_\R e^{iu(s_2-s_1)} \frac{ e^{(R+iu)x} e^{(1-R-iu)y} }{ (R+iu)(R+iu-1) }  \,du .\]
Finally, by Fubini's theorem
\begin{align*}
\E_Q(h_{e^{Y_T^2}}(Y_T^1)) &= \frac{e^{R(s_2-s_1)-s_2}}{2\pi} \int_{\R^2} \int_\R e^{iu(s_2-s_1)} \frac{ e^{(R+iu)x} e^{(1-R-iu)y} }{ (R+iu)(R+iu-1) }  \,du \, P_{(Y_T^1,Y_T^2)}(dx,dy) \\
&= \frac{e^{R(s_2-s_1)-s_2}}{2\pi} \int_{\R} e^{iu(s_2-s_1)} \frac{\Phi_{(Y_T^1,Y_T^2)}(R+iu,1-R-iu)}{(R+iu)(R+iu-1)} \, du ,
\end{align*}
where the application of Fubini's theorem is justified by
\begin{align*}
\int_{\R^2} \int_\R \Big| \frac{ e^{(R+iu)x} e^{(1-R-iu)y} }{ (R+iu)(R+iu-1) } \Big|  \,du \, P_{(Y_T^1,Y_T^2)}(dx,dy) &= \int_{\R^2} e^{Rx} e^{(1-R)y} \int_\R | \wh{g_1}(u) | \,du \, P_{(Y_T^1,Y_T^2)}(dx,dy) \\
&\leq \norm{\wh{g_1}}_{\L^1} \Phi_{(Y_T^1,Y_T^2)}(R,1-R) < \infty,
\end{align*}
since $\norm{\wh{g_1}}_{\L^1}<\infty$ as shown in \cite[Example 5.1]{Eberlein2008}.
\end{proof}

\section{Calibration of the OU-Wishart model}\label{section: OU-Wishart}\label{sec: OU-Wishart}

We now put forward a specific parametric specification of the model discussed in Section \ref{s:model}. To this end, let $n\in\N$, $\Theta\in\S_d^+$ and let $X$ be a $d\times n$ random matrix with i.i.d.\ standard normal entries. Then, the matrix $M:=\Theta^\frac12 XX^\T \Theta^\frac12$ is said to be \emph{Wishart distributed}, written $M \thicksim\W_d(n,\Theta)$. Note that this definition can be extended to noninteger $n>d-1$ using the characteristic function
\beq Z \mapsto \det(I_d-2iZ\Theta)^{-\frac12 n}, \label{eq: cf Wishart}\eeq
see \cite[Theorem 3.3.7]{grupta}. Since $M\in\S_d^+$ almost surely, we can define a compound Poisson matrix subordinator $L$ with intensity $\lambda$ and $\W_d(n,\Theta)$ distributed jumps. We call the resulting multivariate stochastic volatility model of OU type  \emph{OU-Wishart model}.

\begin{remark}
There exists a subclass of structure preserving EMMs $Q$ (cf. Theorem \ref{th_sp_emm}) such that we have an OU-Wishart model under both $P$ and $Q$. This means that $L$ is a compound Poisson process with $\W_d(n,\Theta)$ distributed jumps and intensity $\lambda$ under $P$, and $\W_d(\wt{n},\wt{\Theta})$ distributed jumps with intensity $\wt{\lambda}$ under $Q$. We only need to assume that the Wishart distribution under both $P$ and $Q$ has a Lebesgue density, i.e., $n,\wt{n}>d-1$ and $\Theta,\wt{\Theta}\in\S_d^{++}$. Then, one simply has to take $y$ as the quotient of the respective \levy densities. Hence, by \cite[3.2.1]{grupta}, $y$ has to be defined as
\[ y(X) = \frac{\wt{\lambda}}{\lambda} \(2^{\frac12 (\wt{n}-n) d} \frac{\Gamma_d\(\frac12 \wt{n}\)}{\Gamma_d\(\frac12 n\)} \frac{\det(\wt{\Theta})^{\frac12 \wt{n}}}{\det(\Theta)^{\frac12 n}} \)^{-1} \det(X)^{\frac12(\wt{n}-n)} e^{-\frac12\tr\((\wt{\Theta}^{-1}-\Theta^{-1}) X\)}, \quad X\in\S_d^+. \]
\end{remark}

Since we have $\int_{\S_d^+} e^{\tr(RX)} \, \kappa_L(dX) = \lambda \det(I_d-2R\Theta)^{-\frac12 n}$ by (\ref{eq: cf Wishart}), we see that the compound Poisson process $L$ has exponential moments as long as $\norm{R}<\frac{1}{2\norm{\Theta}}$, where $\norm{\cdot}$ denotes the spectral norm. Consequently, (\ref{finite_exp_moment}) holds for $\epsilon:=\frac{1}{2\norm{\Theta}}$, and we can apply the integral transform methods from the previous section to compute prices of multi-asset options.

Note that for the particularly simple special case of diagonal $A$, $\beta$ and $\rho$, each asset follows a BNS model at the margins by \eqref{sde: Y marginal} and \eqref{sde: Sigma marginal}. In particular, for $n=2$ we see that $L^{ii}$, $i=1,\ldots,d$, is a compound Poisson subordinator with exponentially distributed jumps, thus we have in distribution the \emph{$\Gamma$-OU BNS model} with stationary Gamma distribution at the margins, cf., e.g., \cite[Section 2.2]{Nicolato2003}. Then, the characteristic functions of the single assets are known in closed form. Note that while the characteristic function of the stationary distribution of the marginal OU type process is still known for $n \neq 2$, it no longer corresponds to a Gamma distribution in this case.

\subsection{The OU-Wishart model in dimension 2}\label{sec: 2dim OU-Wishart}

We work directly under a pricing measure $Q$ and consider the following specific two-dimensional case of our model, where we restrict ourselves in particular to a diagonal mean-reversion matrix $A$ and a leverage term  $\rho$ such that both jumps of the respective variance and of the covariance enter the price. Our model is given by
\begin{align*}
\bpm dY_t^1 \\ dY_t^2 \epm &= \left( \bpm \mu_1 \\ \mu_2 \epm - \frac12 \bpm \Sigma_t^{11} \\ \Sigma_t^{22} \epm \right) \,dt + \bpm \Sigma_t^{11} & \Sigma_t^{12} \\ \Sigma_t^{12} & \Sigma_t^{22} \epm^\frac12 \bpm dW_t^1 \\ dW_t^2 \epm + \bpm \rho_1 \,dL_t^{11} + \rho_{12}\,dL_t^{12} \\ \rho_2 \,dL_t^{22} + \rho_{21}\,dL_t^{12} \epm \\
\bpm d\Sigma_t^{11} & d\Sigma_t^{12} \\ d\Sigma_t^{12} & d\Sigma_t^{22} \epm &= \left( \bpm \gamma_1 & 0 \\ 0 & \gamma_2 \epm + \bpm 2 a_1 \Sigma_t^{11} & (a_1+a_2) \Sigma_t^{12} \\ (a_1+a_2) \Sigma_t^{12} & 2 a_2 \Sigma_t^{22} \epm \right)\,dt +  \bpm dL_t^{11} & dL_t^{12} \\ dL_t^{12} & dL_t^{22} \epm
\end{align*}
with initial values
\[ Y_0 = \bpm 0 \\ 0 \epm, \quad \Sigma_0 = \bpm \Sigma_0^{11} & \Sigma_0^{12} \\ \Sigma_0^{12} & \Sigma_0^{22} \epm \in \S_2^{++} , \]
and parameters $\gamma_1,\gamma_2 \geq 0,\, a_1,a_2<0,\, \rho_1,\rho_2,\rho_{12},\rho_{21}\in\R$. $L$ is a compound Poisson process with intensity $\lambda$ and $\W_2(n,\Theta)$-jumps, where $n=2$ and
\[ \Theta = \bpm \Theta_{11} & \Theta_{12} \\ \Theta_{12} & \Theta_{22} \epm\in\S_2^+.\]
Therefore, all components of $L$ jump at the same time. Since the second order properties of the Wishart distribution are known explicitly, cf. \cite[Theorem 3.3.15]{grupta}, the covariances of the jumps are given by
\begin{align*}
\mathrm{Cov}(\Delta L_t^{11},\Delta L_t^{12}|\Delta L_t^{11}\neq 0) &= 4\Theta_{11}\Theta_{12}, \\
\mathrm{Cov}(\Delta L_t^{22},\Delta L_t^{12}|\Delta L_t^{11}\neq 0) &= 4\Theta_{22}\Theta_{12} , \\
\mathrm{Cov}(\Delta L_t^{11},\Delta L_t^{22}|\Delta L_t^{11}\neq 0) &= 4\Theta_{12}^2.
\end{align*}
This shows that even if $\rho$ is diagonal, i.e., $\rho_{12}=0=\rho_{21}$, the leverage terms of both assets are correlated. If $\rho$ is non-diagonal, then $\theta_{12}$ also influences the marginal distribution of each asset.

\paragraph{Multi-asset option pricing}

By (\ref{eq: Phi}) and (\ref{eq: cf Wishart}), the joint moment generating function of $(Y^1,Y^2)$ is given by
\begin{align*}
&\E[e^{y^\T Y_t}] =\\
& \exp \left( y^\T \mu t  + \tr(\Sigma_0 H_y(t)) +  \int_0^t \tr(\gamma_L H_y(s)) ds + \lambda  \int_0^t \frac{1}{\det(I_2-2(H_y(s)+\rho^*(y))\Theta)} \,ds - \lambda t \right) 
\end{align*}
with $H_y$ as in (\ref{eq: H}), $A=\(\bsm a_1 & 0 \\ 0 & a_2 \esm\)$, $\gamma_L=\(\bsm \gamma_1 & 0 \\ 0 & \gamma_2 \esm\)$, and $\rho^*(y)= \( \bsm \rho_1 y_1 & \rho_{12}y_1 \\ \rho_{21}y_2 & \rho_2 y_2 \esm \)$. It does not seem to be possible to obtain a closed form expression in terms of ordinary functions, unless one sets $a_1=a_2=:a$. In this case, if $\Delta = \sqrt{4b_0b_2-b_1^2}\neq 0$, one has
\begin{align*}
\E[e^{y_1 Y_t^1 + y_2 Y_t^2}] =& \exp\left\{ y_1 \mu_1 t + y_2 \mu_2 t + \frac{e^{2at}-1}{4a} \tr\left(\Sigma_0 \bpm  y_1^2-y_1 & y_1 y_2 \\ y_1 y_2 & y_2^2-y_2 \epm \right)\right. \\ 
& \left. \qquad + \frac{1}{4a}\(\gamma_1(y_1^2-y_1)+\gamma_2(y_2^2-y_2)\)\(\frac{1}{2a}(e^{2at}-1)-t\) \right. \\
& \left. \qquad + \frac{\lambda}{2a b_0}\left[ \frac{b_1}{\Delta} \left( \arctan\left(\frac{2b_2+b_1}{\Delta}\right) - \arctan\left(\frac{2b_2e^{2at}+b_1}{\Delta}\right) \right) \right.\right. \\
& \left.\left. \qquad + \frac12 \ln\left(\frac{b_0+b_1+b_2}{b_2e^{4at}+b_1e^{2at}+b_0}\right) \right] +  \frac{\lambda}{b_0} t - \lambda t \right\}
\end{align*}
with coefficients
\begin{align*}
b_0 &:= 1+4\det(B-C)+2\tr(B-C), \\
b_1 &:= -8\det(B)+4\tr(B)\tr(C)-4\tr(BC)-2\tr(B), \\
b_2 &:= 4\det(B), \\
\Delta &:= \sqrt{4b_0b_2-b_1^2},
\end{align*}
and matrices
\begin{align*}
B := \frac{1}{4a} \bpm  y_1^2-y_1 & y_1 y_2 \\ y_1 y_2 & y_2^2-y_2 \epm \Theta, \quad C := \bpm \rho_1 y_1 & \rho_{12}y_1 \\ \rho_{21} y_2 & \rho_2 y_2 \epm \Theta.
\end{align*}
Note that $\arctan$ has to be understood as a function of complex argument to cover the case where the term in the square root of $\Delta$ is negative. If $\Delta=0$, we obtain
\begin{align*}
&\E[e^{y_1 Y_t^1 + y_2 Y_t^2}] =\\
& \quad \exp\left\{ y_1 \mu_1 t + y_2 \mu_2 t + \frac{e^{2at}-1}{4a} \tr\left(\Sigma_0 \bpm  y_1^2-y_1 & y_1 y_2 \\ y_1 y_2 & y_2^2-y_2 \epm \right)\right. \\ 
& \quad \left. \qquad + \frac{1}{4a}\(\gamma_1(y_1^2-y_1)+\gamma_2(y_2^2-y_2)\)\(\frac{1}{2a}(e^{2at}-1)-t\) \right. \\
& \quad \left. \qquad + \frac{\lambda}{2a b_0}\left[ \frac{b_1}{2b_2e^{2at}+b_1} - \frac{b_1}{2b_2+b_1} + \frac12 \ln\left(\frac{b_0+b_1+b_2}{b_2e^{4at}+b_1e^{2at}+b_0}\right) \right] +  \frac{\lambda}{b_0} t - \lambda t \right\}.
\end{align*}
Using $\det(A+B)=\det(A)+\det(B)+\tr(A)\tr(B)-\tr(AB)$ for $A,B\in M_2(\R)$, the above formulae follow from 
\[ \det(I_2-2(H_y(s)+\rho^*(y))\Theta) = \det(I_2-2(e^{2as}-1)B-2C) = b_0 + b_1 e^{2as} + b_2 e^{4as}, \]
and straightforward integration. Likewise, one can also derive a closed form expression for $n=4,6,\ldots$ using \cite[2.18(4)]{TISP}. 

Consequently, one faces a tradeoff at this point. One possibility is to retain the flexibility of different mean reversion speeds $a_i$ by evaluating the remaining integral using numerical integration. Alternatively, one can restrict attention to identical mean reversion speeds in order to have a closed-form expression of the moment generating function at hand. The impact of this decision on the calibration performance is discussed in Section \ref{ss:empirical} below.

\paragraph{Single-asset option pricing}
For pricing single-asset options, one only needs the transforms of the marginal models, such that the above expressions simplify considerably. For example, the moment generating function of $Y^1$ is given by
\begin{align*}
\E[e^{y_1 Y_t^1}] = \exp\bigg\{ y_1\mu^1 t &+ \frac{e^{2a_1t}-1}{4a_1} (y_1^2-y_1) \Sigma_0^{11}  + \frac{1}{4a_1}\left(\frac{1}{2a_1}(e^{2a_1 t}-1)-t\right) (y_1^2-y_1) \gamma_L^{11}  \nonumber\\
 &+ \frac{\lambda}{2a_1 b_0} \ln\(\frac{b_0+b_1}{b_0+e^{2a_1 t}b_1}\) + \frac{\lambda t}{b_0} - \lambda t \bigg\}, 
\end{align*}
where $b_0$ and $b_1$ simplify to
\begin{align*}
b_0 &= 1 + \(\frac{1}{2a_1} (y_1^2-y_1)-2\rho_1 y_1\)\Theta_{11} - 2\rho_{12} y_1 \Theta_{12}, \\
b_1 &= -\frac{1}{2a_1} (y_1^2-y_1) \Theta_{11}.
\end{align*}
Note that one can use the recursion formula stated in \cite[2.155]{TISP} to obtain a closed form expression for $\W_2(n,\Theta)$-jumps with $n \in 2 \mathbb{N}$, too. In the special case where the operator $\rho$ is diagonal, i.e., if $\rho_{12}=\rho_{21}=0$, the margins are (in distribution) $\Gamma$-OU BNS models, whose moment generating function has been derived in \cite[Table 2.1]{Nicolato2003}.

\begin{remark}[High Dimensionality]
The above model can also be defined for $d>2$, but of course, the Fourier formula \eqref{derivative_fourier_inversion_eq} becomes numerically infeasible in high dimensions. Nevertheless, if $\rho$ is diagonal, the calibration of a high dimensional OU-Wishart model is still possible by only
evaluating options on \emph{two} underlyings. Using zero strike spread
options and provided the characteristic function is known explicitly, this means
that one only has to evaluate single integrals numerically, as in the
univariate case. Indeed, combining \cite[Proposition 4.5]{Barndorff-Nielsen2009} and the fact that every symmetric sub-matrix of a Wishart distributed matrix is again Wishart distributed, cf. \cite[Theorem 3.3.10]{grupta}, it follows that the joint dynamics of each pair of assets follows a $2$-dimensional OU-Wishart model as above. Hence, we can calibrate the model using only two-asset options (e.g., spread options).  The price to pay is that the resulting model only incorporates pairwise dependencies, since the respective covariances completely determine the underlying Wishart distribution.
\end{remark}

\begin{remark}
If $\rho$ is diagonal, we have equivalence in distribution of the margins of our model to a $\Gamma$-BNS model. This implies immediately that we need to use prices on multi-asset options in order to infer all parameters from observed option prices. If $\rho$ is non-diagonal, we have a $\Gamma$-BNS model with an additional (correlated) jump term. Due to this additional term, it might be possible to infer $\Theta_{12}$ from single-asset options. However, one cannot obtain $\Sigma_0^{12}$ in this way because it does not appear in the marginal moment generating function.

In many multi-factor univariate models one can in general similarly not be sure whether one can uniquely determine all parameters from observed option prices. In many papers the parameters are calibrated and the procedure seems to work, but we are not aware of any reasonably complex multi-factor model where the identifiability of the parameters based on option prices has been established. The reason is clearly the highly nontrivial relation between the parameters and the option prices.
\end{remark}

\subsection{Empirical illustration}\label{ss:empirical}

The aim of this subsection is to show that a calibration of the OU-Wishart model to market prices is feasible. Since multi-asset options are mostly traded over-the-counter, it is difficult to obtain real price quotes. To circumvent this problem, we proceed as in \cite{Taylor2009} and consider \emph{foreign exchange rates} instead, where a call option on some exchange rate can be seen as a spread option between two others. Let us emphasise that our calibration routine should not be seen as a finished product, but much rather as a first test and proof of principle. A more detailed investigation as well as an extension to numerically more involved models with non-diagonal $A$ is left to future research.

We consider a $2$-dimensional OU-Wishart model as above. Our first asset is the EUR/USD exchange rate $S^{\$/\eur}=S_0^{\$/\eur}e^{Y^1}$, that is, the price of 1 \eur\ in \$, and our second asset is the GBP/USD exchange rate $S^{\$/\pounds}=S_0^{\$/\pounds}e^{Y^2}$, i.e., the price of 1 \pounds\ in \$. We model directly under a martingale measure. Therefore we have, by Theorem \ref{th_mart}, that
\begin{align*}
\mu_1 = r_{\$}-r_{\eur} - \int_{\S_d^+} ( e^{\rho_1 X^{11} + \rho_{12} X^{12}} - 1 ) \,\kappa_L(dX).
\end{align*}
Since $\kappa_L$ is the intensity $\lambda$ times a Wishart distribution with parameters $n=2$ and $\theta$, this simplifies to
\begin{align*}
\mu_1 =& r_{\$}-r_{\eur} - \lambda \left(\det\left(I_2-2(\bsm \rho_1 & \rho_{12} \\ 0 & 0 \esm) \Theta\right)^{-1} -1\right) \\
=& r_{\$}-r_{\eur} - \lambda \frac{2\rho_1\Theta_{11}+2\rho_{12}\Theta_{12}}{1-2\rho_1\Theta_{11}-2\rho_{12}\Theta_{12}}.
\end{align*}
Likewise we have
\begin{align*}
\mu_2 = r_{\$}-r_{\pounds} - \lambda \frac{2\rho_2\Theta_{22}+2\rho_{21}\Theta_{12}}{1-2\rho_2\Theta_{22}-2\rho_{21}\Theta_{12}}.
\end{align*}
Thus, for $\rho_{12}=0$ or $\rho_{21}=0$, we recover the martingale conditions of the $\Gamma$-OU BNS model.
By \cite[13.4]{Hull2003}, it follows that the price in \$\ of a plain vanilla call option on $S^{\$/\eur}$ or $S^{\$/\pounds}$ is given by $e^{-r_{\$}T}\E((S_T^{\$/\eur}-K)^+)$ or $e^{-r_{\$}T}\E((S_T^{\$/\pounds}-K)^+)$, respectively. Now observe that the  \$-payoff of a call option on the EUR/GBP exchange rate $S^{\pounds/\eur}$ is given by $S_T^{\$/\pounds}(S_T^{\pounds/\eur}-K)^+ = (S_T^{\$/\eur}-K S_T^{\$/\pounds})^+$, hence it can be regarded as a spread option on $S^{\$/\eur}-S^{\$/\pounds}$ where the initial value of the second asset is replaced by $K S_0^{\$/\pounds}$. Since it is a zero-strike spread option, we can use Proposition \ref{prop_spread_no_strike} to valuate it.

\begin{figure}[hp] 
\includegraphics[width=\linewidth]{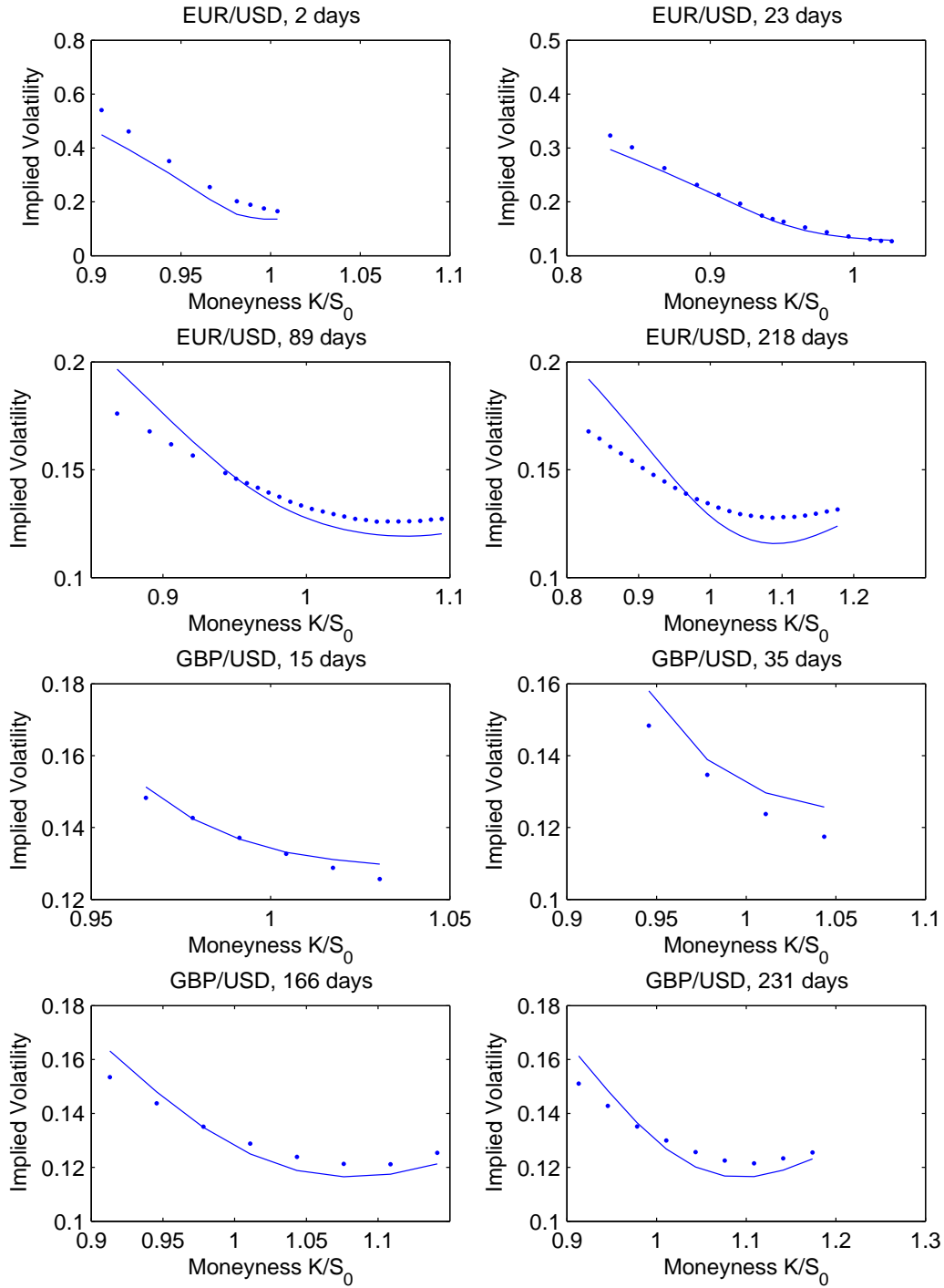}
\caption[]{Comparison of the Black-Scholes implied volatility of market prices (dots) and model prices (solid line). The plots only show the results for the 12-parameter OU Wishart model (Step A), since they do not change visually for the more complex models from Step B to D.}
\label{fig:impvol1}
\end{figure}

\begin{figure}[h] 
\includegraphics[width=\linewidth]{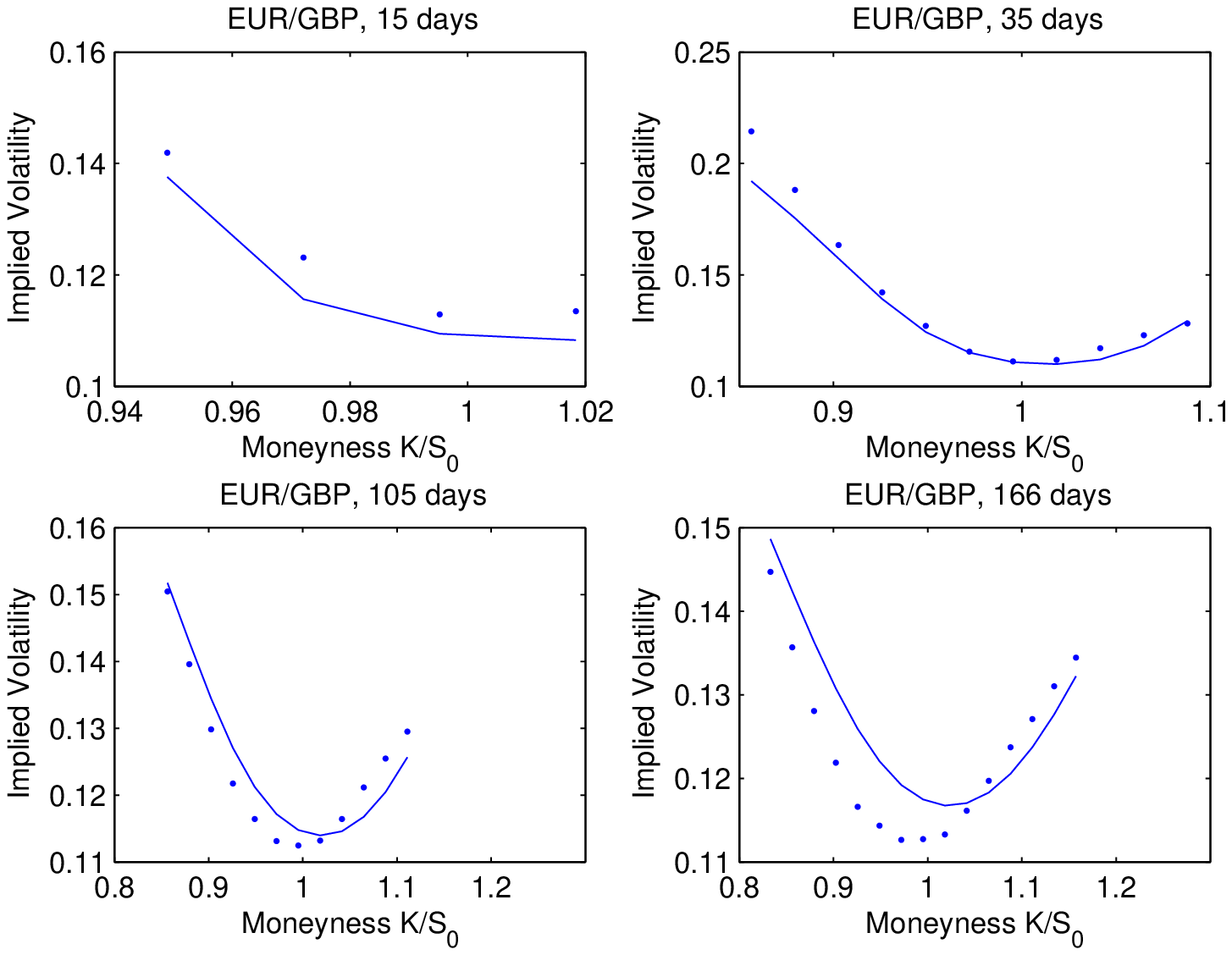}
\caption[]{Comparison of the Black-Scholes implied volatility of market prices (dots) and model prices (solid line). The plots only show the results for the 12-parameter OU Wishart model (Step A), since they do not change visually for the more complex models from Step B to D.}
\label{fig:impvol2}
\end{figure}

We obtained the option price data from EUWAX on April 29, 2010, at the end of the business day. The EUR/USD exchange rate at that time was $S_0^{\$/\eur}=1.3249\$$, the GBP/USD exchange rate was $S_0^{\$/\pounds}=1.5333 \$ $ and the EUR/GBP exchange rate was $0.8641\pounds$. As a proxy for the instantaneous riskless interest rate we took the 3-month LIBOR for each currency, viz. $r_{\eur}=0.604\%$, $r_{\pounds}=0.344\%$ and $r_{\$}=0.676\%$. All call options here are plain vanilla call options of European style. We used 148 call options on the EUR/USD exchange rate, 67 call options on the GBP/USD exchange rate, and 105 call options on the EUR/GBP exchange rate, all of them for different strikes and different maturities, for a total of 320 option prices. We always used the mid-value between bid and ask price. A spread sheet containing all data used for the calibration can be found on the second author's website.

The calibration was performed by choosing the model parameters so as to minimise the root mean squared error (RMSE) between the Black-Scholes volatilities implied by market resp.\ model prices. Note that the RMSE is the square root of the sum of the squared distances divided by the number of options. All computations were carried out in MATLAB and performed on a standard desktop PC with a $2.4$GHz processor.

In \emph{Step A}, we impose $a:=a_1=a_2$ and $\rho_{12}=0=\rho_{21}$, i.e., we make the assumption that the mean reversion parameters of both assets are equal, and that $\rho$ is diagonal. This is the most tractable case, since there is a closed form expression for the moment generating function of $(Y^1,Y^2)$ and the number of model parameters is reduced to 12. The starting and calibrated parameters can be found in Table \ref{table: par}. The overall RMSE is 0.0082, and the run time was 48 minutes, i.e., calibration of the model is feasible even on a standard PC. If one considers only the marginal models for EUR/USD and GBP/USD one has a RMSE of 0.0106 and 0.0048 respectively.  For visualisation, we provide Figure \ref{fig:impvol1} and \ref{fig:impvol2}, where market and model prices are compared in terms of Black-Scholes implied volatility for a few selected maturities. These results illustrate that even this simple model is able to fit the observed smiles rather well.  For comparison, we calibrated two independent univariate $\Gamma$-OU BNS models to the margins separately (see Table  \ref{table: par}) and obtained a lower RMSE of 0.0071 and 0.0020 respectively. This stems from the fact that the additional dependence parameters do not enter the pricing formulas for single asset options, whereas the intensity of the compound Poisson process is the same for all assets in our multivariate framework, unlike when using two univariate models. This means that we are \emph{not overfitting} the marginal distributions with an excessive amount of additional parameters, but much rather using a simplified version of a standard model. Nevertheless, the calibration still performs quite well even when using this simplification. 

\begin{figure}[h] 
\includegraphics[width=\linewidth]{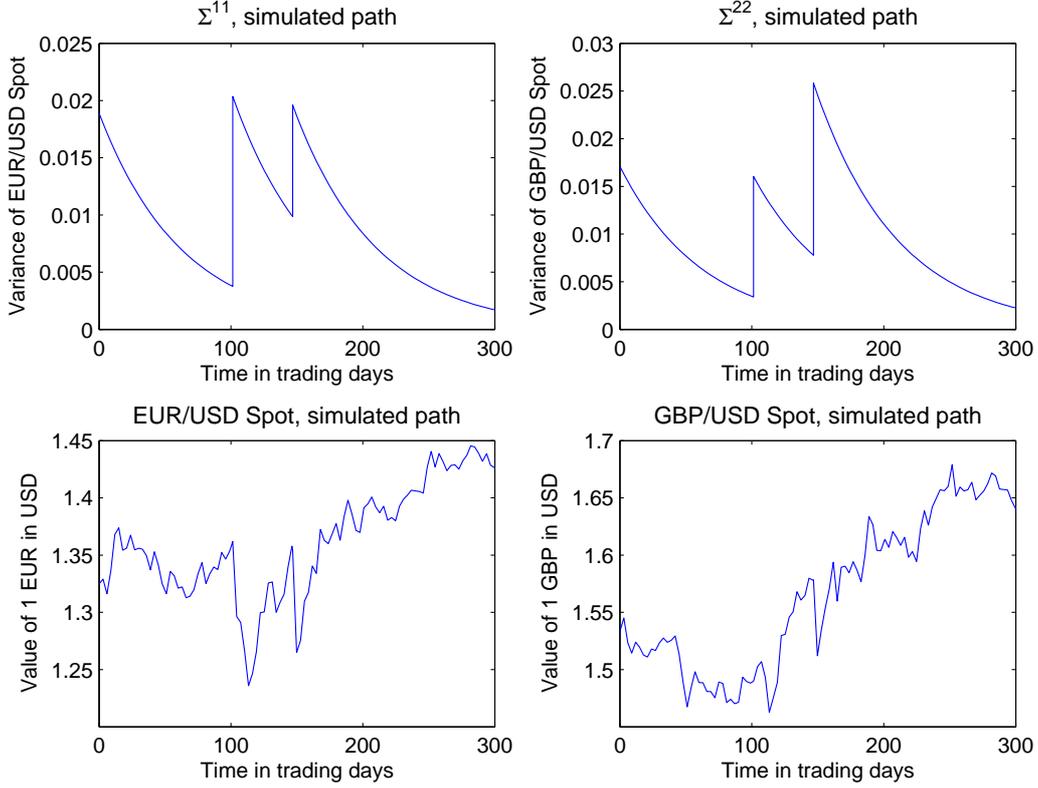}
\caption[]{Simulated sample paths of the EUR/USD and the GBP/USD spot rates and their variances.}
\label{fig:simulation}
\end{figure}

As a further cross-check, Figure \ref{fig:simulation} depicts sample paths of the EUR/USD and the GBP/USD spot rates and their variances, simulated with our calibrated parameters, which show reasonable path properties.

In \emph{Step B}, we allow for a non-diagonal leverage operator $\rho$. Although this introduces two additional parameters, $\rho_{12}$ and $\rho_{21}$, a closed form expression for the moment generating function is still available. As initial values, we take the parameters obtained in Step A and set $\rho_{12}$ and $\rho_{21}$ to zero. After 80 minutes, the optimizer finds a minimum with a RMSE of $0.0079$. At the margins, we have RMSEs of 0.0104 and 0.0037, respectively. Hence, calibration is still feasible without resorting to higher-powered computers, but the gains in fitting accuracy appear to be only moderate for the option price surface at hand.

Next, we drop the assumption of an equal mean reversion parameter and allow for $a_1\neq a_2$. Since the moment generating function of $(Y^1,Y^2)$ is then not known in closed form anymore, good starting values are particularly important in order to reduce computational time to an acceptable value. We distinguish the two cases where $\rho$ is diagonal (Step C) and $\rho$ is non-diagonal (Step D), and take as starting values, the parameters obtained from Step A or Step B, respectively. Interestingly, in \emph{Step C} the optimizer finds the minimum at the same parameters as in Step A, thus the additional freedom of different mean reversion parameters does not yield a better fit in this case. 

Finally, in \emph{Step D}, we calibrate the full model with non-diagonal $\rho$ and different mean reversion speeds $a_1,a_2$. Due to the lack of a closed-form expression for the moment generating function and the high number of parameters (15), the run-time increases to an unsatisfactory 10 hours on our standard PC, suggesting that higher-powered computing facilities and an optimized numerical implementation in a compiled instead of an interpreted language should be employed here. In contrast to Step C, we find an improvement by allowing for different mean reversion speeds: The overall RMSE is $0.0076$. Then again, for the data set at hand, the improvement is again only slight  compared to the simplest model considered in Step A. 

\begin{table}[ht]
\begin{footnotesize}
\begin{center}
\begin{tabular}{c||c|c|c|c|c|c|c|c|c|c|c|c|c|c|c}
 Step & $\lambda$ & $a_1$ & $\rho_1$ & $\rho_{12}$ & $\Theta^{11}$ & $\Sigma_0^{11}$ & $\gamma_1$ \\
 \hline 
 A & 0.774 & -2.392 & -3.741 & / & 0.011 & 0.019 & 0.027  \\ 
 B & 0.901 & -3.008 & -5.364 & 0.679 & 0.011 & 0.019 & 0.034 \\
 C & 0.774 & -2.392 & -3.741 & / & 0.011 & 0.019 & 0.027 \\
 D & 1.231 & -7.562 & -6.806 & 0.948 & 0.010 & 0.024 & 0.097 \\ 
 univ. 1 & 0.781 & -32.177 & -5.995 &/& 0.007 & 0.034& /  \\  
 univ. 2 & 0.864  & / & / & / & / & / & / \\
 initial & 0.800 & -2.500 & -3.000 & / & 0.010 & 0.020 & 0.020
\end{tabular}

\vspace{0.25cm}

\begin{tabular}{c||c|c|c|c|c|c|c|c|c|c|c|c|c|c|c}
 Step & $a_2$ & $\rho_2$ & $\rho_{21}$ & $\Theta^{22}$ & $\Sigma_0^{22}$ & $\gamma_2$ & $\Theta^{12}$ & $\Sigma_0^{12}$ \\
 \hline 
 A & / & -0.494 & /& 0.063 & 0.017 & 0.000 & 0.022 & 0.013  \\ 
 B & / & -0.661 & 0.896 & 0.067 & 0.018 & 0.000 & 0.023 & 0.013 \\
 C & -2.392 & -0.494 & /& 0.063 & 0.017 & 0.000 & 0.022 & 0.013  \\
 D  & -6.553 & -0.535 & 1.188 & 0.102 & 0.021 & 0.000 & 0.030 & 0.016 \\ 
 univ. 1   & / & / & / & / & / & / & / & / \\  
 univ. 2 & -2.482 & -0.471 & / & 0.050 & 0.017 & 0.012 & / & / \\
 initial &/ & -0.500 & / &  0.030 & 0.015 & 0.011 & 0.010 & 0.010 
\end{tabular}
\caption{Calibrated parameters for different models. In decreasing order: models from step A to D; univariate BNS model for EUR/USD and GBP/USD; initial parameters.}
\label{table: par}
\end{center}
\end{footnotesize}
\end{table}

\paragraph{Comparison with other bivariate models}

We now compare our bivariate Wishart-OU model to some benchmarks from the literature. The canonical candidate would be the bivariate Wishart model, which also exhibits stochastic correlations between the assets and has very recently been calibrated to market prices by \cite{dafonseca.grasselli.10}. However, the involved parameter restrictions necessary for the existence of the Wishart process are not satisfied in the results of the calibration. This suggests that some kind of constrained optimization must be incorporated, which is beyond our scope here. However, we emphasize that the Wishart model should yield a comparable performance once these implementation issues have been resolved in a satisfactory manner.

Instead, we use the multivariate Variance Gamma (henceforth VG) model of \cite{Luciano2006}, and a generalization with stochastic volatility suggested therein for our comparison.  In the mutivariate VG model with parameters $(\theta_i,\sigma_i,\nu)$, $i=1,2$, the log-price processes $Y^1,Y^2$ are given by two independent Brownian motions with drift which are subordinated by a common Gamma process. The joint moment generating function of the log-price processes under a risk neutral measure is shown to be given by
\begin{align*}
E[\exp(y_1 Y^1_t+y_2 Y^2_t)]= e^{(y_1(r_{\$}-r_{\eur}+w_1)+y_2(r_{\$}-r_{\pounds}+w_2))t}\left(1-\nu\sum_{i=1}^2{\left(y_i\theta_i+\frac12 y_i^2\sigma_i^2\right)}\right)^{-t/\nu},
\end{align*}
with $w_i=\nu^{-1}\log\left(1-\theta_i\nu-\frac12\sigma_i^2\nu\right)$. The parameters obtained from a calibration of this model to our option data set can be found in Table \ref{table: par2}. The corresponding overall RMSE is $0.0134$, which is roughly $63\%$ higher than the RMSE obtained from the calibration of our 12-parameter OU-Wishart model from Step A. At the EUR/USD and GBP/USD margin the multivariate VG model has a RMSE of $0.0161$ and $0.0107$. Consequently, the performance of this model is much worse than for the OU-Wishart model, which is not surprising since it only involves 5 parameters.

To alleviate this issue, our second benchmark allows for stochastic activity driven by an OU type process. More specifically, the log-price processes of the EUR/USD and GBP/USD spot rate are given by $Y^1_t={X_{Z_t}^1}$ and $Y^2_t={X_{Z_t}^2}$, where $X^1$ and $X^2$ are two independent Variance Gamma processes with parameters $(\theta_i,\sigma_i,\nu_i)$, $i=1,2$, and $Z_t=\int_0^t z_s ds$ is an integrated Ornstein-Uhlenbeck process. The Ornstein-Uhlenbeck process $(z_s)_{s\in\R^+}$ is given by $dz_s=2\alpha z_s ds+dN_{-2\alpha t}, z_0=1$, $\alpha<0$, where $N$ is a compound Poisson process with intensity $\vartheta$ and $\mathrm{Exp}(\xi)$ distributed jumps. It can be shown that the moment generating function of $Z_t$, see, e.g., \cite[7.2.2]{Schoutens2003}, is given by
\begin{align*}
 \Phi_{Z_t}(y)=\exp\left(\frac{y}{2 \alpha}(\exp(2 \alpha t)-1) + \frac{2 \alpha \vartheta (t y - \xi log[-2 \alpha \xi] + \xi log[(\exp(2 \alpha t)-1) y - 2 \alpha \xi])}{y + 2 \alpha \xi}\right).
\end{align*}
For the moment generating function of $Y_t=(Y^1_t,Y^2_t)$, conditioning on the stochastic activity process $Z$ yields
\begin{align*}
 \Phi_{Y_t}(y_1,y_2) = \Phi_{Z_t}\left( \log\Phi_{X_1^1}(y_1) + \log\Phi_{X_1^2}(y_2) \right)
\end{align*}
with $\Phi_{X_1^i}(y_i)= \left(1-y_i\theta_i\nu_i-\frac12\sigma_i^2 y_i^2\nu_i \right)^{-1/\nu_i}$, $i=1,2$. Thus, the joint moment generating function of the log-price processes $Y^1_t, Y^2_t$ under a risk neutral measure is given by 
\begin{align*}
 \Phi_{Y_t}(1,0)^{-y_1} \Phi_{Y_t}(0,1)^{-y_2} \Phi_{Y_t}(y_1,y_2).
\end{align*}

\begin{table}[ht]
\begin{footnotesize}
\begin{center}
\begin{tabular}{c|c|c|c|c|c|c|c|c}
 $\theta_1$ & $\theta_2$ & $\sigma_1$ & $\sigma_2$ & $\nu_1$ & $\nu_2$ & $\vartheta$ & $\alpha$ & $\xi$  \\
 \hline 
-0.360 & -0.327 & 0.090 & 0.093 & 0.106 & 0.106 & / & / & / \\
-1.470 &-2.190 &0.001 &0.050 &0.022 &0.001 &0.468 &-42.140 & 1.747
\end{tabular}
\caption{The first row shows the calibrated parameters for the multivariate VG model of \cite{Luciano2006}. The second row contains the calibrated parameters for two independent VG processes with a common integrated $\Gamma$-OU time change.}
\label{table: par2}
\end{center}
\end{footnotesize}
\end{table}

A calibration of this model to our dataset leads to the parameters provided in Table \ref{table: par2}; a plot depicting some of the respective implied volatilities can be found in Figure \ref{fig:impvol3}. The corresponding RMSE is $0.0129$. Somewhat surprisingly, this is only around $4\%$ lower than for the model of \cite{Luciano2006}, despite increasing the parameters from  5 to 9. At the margins, we have $0.0143$ and $0.0095$, which corresponds to improvements of around 11\%. Hence, there is quite some improvement in fitting the margins, but the multivariate options are not fit much better. This suggests that stochastic correlations indeed seem necessary to recapture the features of our empirical dataset.  However, let us emphasize again that this only applies to one specific dataset in the foreign exchange market. A more detailed empirical study is a challenging topic for future research.

\begin{figure}[h] 
\includegraphics[width=\linewidth]{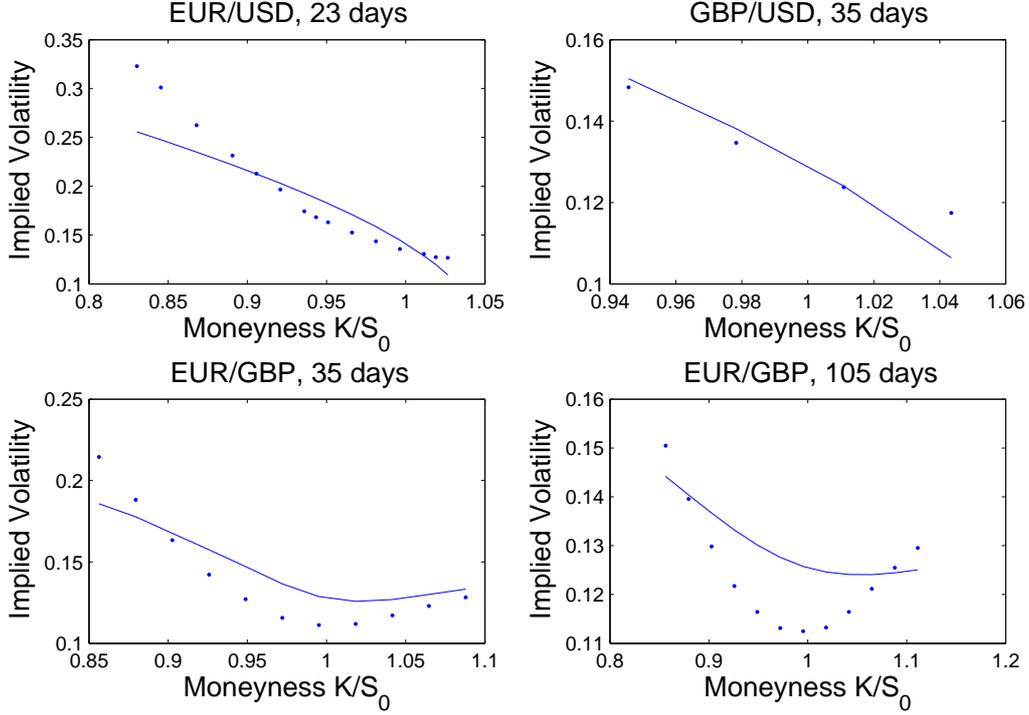}
\caption[]{Comparison of the Black-Scholes implied volatility of market prices (dots) and model prices (solid line). The headers state the underlying and the days to maturity. The plots are for the benchmark model where the log-price processes are modelled by two independent VG processes with a common time change which is given by an integrated $\Gamma$-OU process. The plots for the multivariate VG model from \cite{Luciano2006} look very similar.}
\label{fig:impvol3}
\end{figure}

\section{Covariance swaps}\label{sec: cov swaps}

In this final section, we show that it is possible to price swaps on the covariance between different assets in closed form. This serves two purposes.  On the one hand, options written on the realised covariance represent a family of payoffs that only make sense in models where covariances are modeled as stochastic processes rather than constants. On the other hand, the ensuing calculations exemplify once more the analytical tractability of the present framework.

We consider again our multivariate stochastic volatility model of OU type under an EMM $Q$. In addition, we suppose that the matrix subordinator $L$ is square integrable, i.e.,  $\int_{\{||X||>1\}}||X||^2\kappa_L(dX)<\infty$. The pricing of options written on the realised variance resp.\ the quadratic variation as its continuous-time limit have been studied extensively in the literature, cf., e.g., \cite{carr.lee.08} and the references therein. Since we have a nontrivial correlation structure in our model, one can also consider \emph{covariance swaps} on two assets $i,j\in\{1,\ldots,d\}$, i.e., contracts with payoff $[Y^i,Y^j]_T-K$ with \emph{covariance swap rate} $K=E([Y^i,Y^j]_T)$ (see, e.g., \cite{CarrMadan1999}, \cite{DaFonsecaGrasselliIelpo2008}, or \cite{Swischuk2004} for more background on these products). Now, we show how to compute the covariance swap rate. We have
\[ [Y^i,Y^j]_T = [Y^i,Y^j]_T^c + \sum_{s\leq T} \Delta Y_s^i \Delta Y_s^j = (\Sigma_T^+)^{ij} + \rho^i(X) \rho^j(X) \ast \mu_T^L(dX). \]
Since $\kappa_L(dX)dt$ is the compensator of $\mu^L$, this yields
\beq E([Y^i,Y^j]_T) = (E(\Sigma_T^+))^{ij} + T \int_{\S_d^+} \rho^i(X) \rho^j(X) \kappa_L(dX), \label{eq: gen covswap}\eeq
where $\Sigma_T^+$ was defined in Equation \eqref{prop_int_ou}. Note that by \cite[Proposition 2.4]{pigorsch} and since $|\rho^i(X)\rho^j(X)| \leq ||\rho||^2 ||X||^2$, our integrability assumption on $L$ implies that the expectation is finite. The first summand can be calculated as follows. By setting $y=0$ in Theorem \ref{cf}, we obtain the characteristic function of $\Sigma_t$. Differentiation yields
\[ E(\Sigma_T) = e^{AT} \Sigma_0 e^{A^\T T} + e^{AT}\Abf^{-1}(E(L_1))e^{A^\T T} - \Abf^{-1}(E(L_1)), \]
where $E(L_1)=\gamma_L+\int_{\S_d^+} X \,\kappa_L(dX)$. Using Equation \eqref{prop_int_ou}, we obtain
\[ E(\Sigma_T^+) = \Abf^{-1}(E(\Sigma_T)-TE(L_1)-\Sigma_0), \]
so we only need to know $E(L_1)$. The second summand in (\ref{eq: gen covswap}) can analogously be computed by differentiating the characteristic function of the matrix subordinator $L$.

In our OU-Wishart model, where $L$ is a compound Poisson matrix subordinator plus drift with $\W_d(n,\Theta)$-distributed jumps, we have by \cite[Theorem 3.3.15]{grupta} that
\[ E(L_1) = \gamma_L + \lambda n \Theta. \]
If $\rho$ is diagonal, the second term in (\ref{eq: gen covswap}) simplifies to
\[ T \rho_i \rho _j \int_{\S_d^+} X_{ii} X_{jj} \,\nu(dX) = T \rho_i \rho _j \lambda n \( 2 \Theta_{ij}^2 + n \Theta_{ii} \Theta_{jj} \), \]
again by \cite[Theorem 3.3.15]{grupta}. Thus we have a closed form expression for the covariance swap rate:
\begin{align*} 
K =& \left( \Abf^{-1} \left[ e^{AT}(\Sigma_0+\Abf^{-1}(\gamma_L + \lambda n \Theta))e^{A^\T T} - \Abf^{-1}(\gamma_L + \lambda n \Theta) - T(\gamma_L + \lambda n \Theta) - \Sigma_0 \right] \right)^{ij} \\
&+ T \rho_i \rho _j \lambda n \( 2 \Theta_{ij}^2 + n \Theta_{ii} \Theta_{jj} \).
\end{align*}
For example, in the $2$-dimensional OU-Wishart model from Section \ref{sec: 2dim OU-Wishart} we have, for $i=1$ and $j=2$,
\[ K = \frac{1}{a_1+a_2}\left[\(e^{(a_1+a_2)T}-1\)\(\Sigma_0^{12}+\frac{\lambda n \Theta_{12}}{a_1+a_2}\) - T \lambda n \Theta_{12}\right] + T \rho_1 \rho _2 \lambda n \( 2 \Theta_{12}^2 + n \Theta_{11} \Theta_{22} \). \]
As an illustration we provide, in Figure \ref{fig:covswap}, a plot of the normalized covariance swap rate measured in volaility points, i.e., $T\mapsto \sqrt{\tfrac{1}{T}E([Y^1,Y^2]_T)}$, for our calibrated 12-parameter OU-Wishart model from Section \ref{ss:empirical} (Step A). 

\begin{figure}[ht] 
\includegraphics[width=\linewidth]{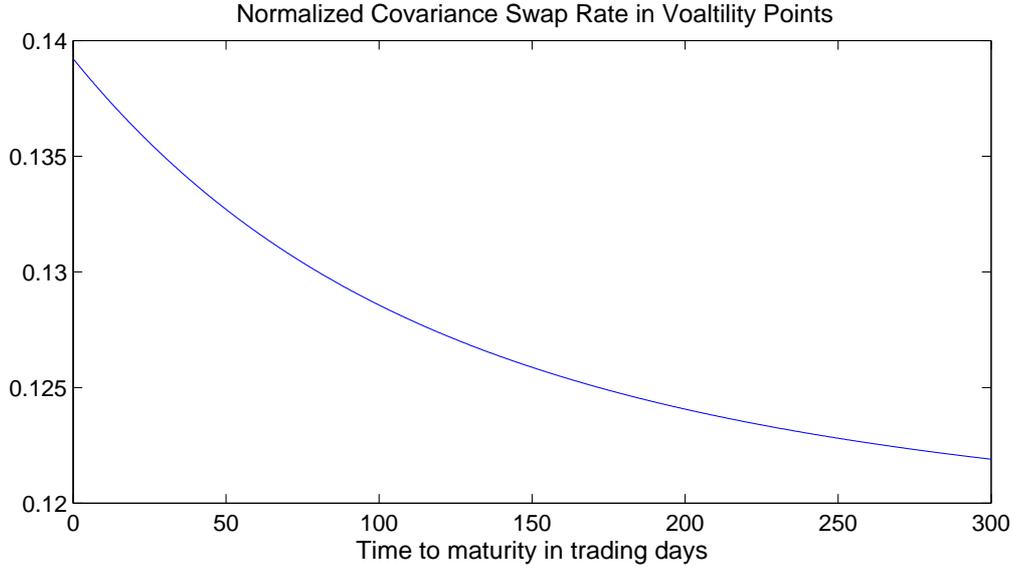}
\caption[]{Normalized covariance swap rate for the calibrated 12-parameter OU-Wishart model.}
\label{fig:covswap}
\end{figure}

Finally, we remark that similarly as in \cite{carr.lee.08}, pricing of options on the covariance can be dealt with using the Fourier methods from Section \ref{pricing}, since the joint characteristic function of $(\Sigma^+,\rho^i(X)\rho^j(X) \ast \mu^L(dX))$ can be calculated similarly as in the proof of Theorem \ref{cf}.

\begin{appendix}
\section{Appendix}
The following result on multidimensional analytic functions is needed in the proof of Lemma \ref{lemma_theta_L_analytic}.
 
\begin{lemma}\label{lemma_analytic_log}
Let $D_\epsilon=\{z\in\C^n: \norm{\Re(z)}<\epsilon\}$ for some $\epsilon>0$. Suppose $f:D_\epsilon\ra\C$ is an analytic function of the form $f=e^F$, where $F:D_\epsilon\ra\C$ is continuous. Then $F$ is analytic in $D_\epsilon$.
\end{lemma}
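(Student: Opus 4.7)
The plan is to exploit the fact that the exponential map has local analytic inverses wherever its image is nonzero, and then use continuity of $F$ to pin down which branch is selected. Since $f = e^F$, the function $f$ is nowhere zero on $D_\epsilon$. Fix $z_0 \in D_\epsilon$ and choose an open polydisc (or ball) $B \subset D_\epsilon$ centred at $z_0$; then $B$ is simply connected. Because $f$ is analytic and nonvanishing on $B$, a standard result of several complex variables yields an analytic function $G : B \to \C$ with $e^{G} = f$ on $B$ (concretely, one may set $G(z) := \log f(z_0) + \sum_{j=1}^n \int_\gamma \partial_{w_j}\log f\,dw_j$ along any path $\gamma$ from $z_0$ to $z$ in $B$, using that $d\log f = df/f$ is a closed analytic $1$-form on $B$).

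With such a $G$ in hand, the equation $e^{F} = f = e^{G}$ on $B$ forces $F(z) - G(z) \in 2\pi i\,\Z$ for every $z \in B$. Now $F$ is continuous by assumption and $G$ is continuous (being analytic), so $F - G$ is a continuous function from the connected set $B$ into the discrete set $2\pi i\,\Z$; hence $F - G$ is constant on $B$. Consequently $F = G + c$ on $B$, which is analytic. Since $z_0 \in D_\epsilon$ was arbitrary, $F$ is analytic throughout $D_\epsilon$.

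The only step that really requires care is the existence of the local analytic logarithm $G$, i.e.\ verifying that the nonvanishing analytic function $f$ admits a single-valued holomorphic logarithm on a simply connected neighbourhood in $\C^n$. This is the multidimensional analogue of the monodromy argument used in one variable; it is handled via the closedness of $d\log f$ on $B$ together with Poincaré's lemma for holomorphic forms on a polydisc (or, equivalently, by integrating along straight-line segments in the star-shaped domain $B$). Everything after this step — transferring analyticity from $G$ to $F$ via a locally constant $2\pi i \Z$-valued shift — is essentially formal.
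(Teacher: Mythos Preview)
Your proof is correct, but it follows a genuinely different route from the paper's. The paper freezes all coordinates but one, applies the \emph{one-variable} result that a nonvanishing analytic function on a convex open subset of $\C$ admits an analytic logarithm, and concludes that $F$ is analytic in each variable separately; it then invokes Hartogs' theorem to obtain joint analyticity. In contrast, you work directly in $\C^n$: you build a local holomorphic logarithm $G$ on a polydisc via the closed holomorphic $1$-form $df/f$ and Poincar\'e's lemma (or path integration on a star-shaped domain), and then transfer analyticity to $F$ by the $2\pi i\Z$-shift argument. Your approach avoids Hartogs' theorem but requires justifying the existence of a multivariable holomorphic logarithm; the paper's approach keeps the logarithm construction elementary and one-dimensional at the cost of appealing to the nontrivial Hartogs result. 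Both are perfectly valid, and the $2\pi i\Z$ discreteness argument is common to the two.
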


\begin{proof}
Let $z=(z_1,z_2,\ldots,z_n) \in D_{\epsilon}$ and define $z_{-1}=(z_2,\dots,z_n)$. Then $f_{z_{-1}}: w \mapsto f(w,z_{-1})$ defines an analytic function without zeros on the open convex set $D_{ \epsilon,z_{-1}} := \{ w\in\C: (w,z_{-1}) \in D_\epsilon \}$. By, e.g., \cite[Satz V.1.4]{Fischer1994},  there exists an analytic function $g_{z_{-1}}^1:D_{\epsilon,z_{-1}}\ra\C$ such that $\exp(g_{z_{-1}}^1)=f_{z_{-1}}$. Hence 
$F(w,z_{-1}) - g_{z_{-1}}^1(w) \in 2\pi i \Z$ on $D_{\epsilon,z_{-1}}$. Since both $F$ and $g$ are continuous, their difference is constant and it follows that $w \mapsto F(w,z_{-1})$ is analytic on $D_{\epsilon,z_{-1}}$. Analogously, one shows analyticity of $F$ in all other components. The assertion then follows from Hartog's Theorem (cf.,  e.g., \cite[Theorem 2.2.8]{Hormander1967}).
\end{proof}

\end{appendix}

\begin{small}
\bibliography{references}

\begin{thebibliography}{10}

\bibitem{bnpa08}
{\sc O.~E. Barndorff-Nielsen and V.~P{\'e}rez-Abreu}, {\em Matrix subordinators
  and related upsilon transformations}, Teor. Veroyatn. Primen., 52 (2007),
  pp.~84--110.

\bibitem{Barndorff-Nielsen2001}
{\sc O.~E. Barndorff-Nielsen and N.~Shephard}, {\em Non-{G}aussian
  {O}rnstein-{U}hlenbeck-based models and some of their uses in financial
  economics}, J. R. Stat. Soc. Ser. B Stat. Methodol., 63 (2001), pp.~167--241.

\bibitem{barndorff}
{\sc O.~E. Barndorff-Nielsen and R.~Stelzer}, {\em Positive-definite matrix
  processes of finite variation}, Probab. Math. Statist., 27 (2007), pp.~3--43.

\bibitem{Barndorff-Nielsen2009}
\leavevmode\vrule height 2pt depth -1.6pt width 23pt, {\em The multivariate
  {supOU} stochastic volatility model}, Math. Finance,  (2011).
\newblock To appear.

\bibitem{Benth}
{\sc F.~E. Benth and L.~Vos}, {\em {A multivariate non-Gaussian stochastic
  volatility model with leverage for energy markets}}, preprint, 2009.
\newblock Available from \url{http://www.math.uio.no/eprint/pure_math/2009/}.

\bibitem{carr.lee.08}
{\sc P.~Carr and R.~Lee}, {\em Robust replication of volatility derivatives},
  preprint, 2008.
\newblock Available from \url{http://www.math.uchicago.edu/~rl}.

\bibitem{CarrMadan1999}
{\sc P.~Carr and D.~B. Madan}, {\em Introducing the covariance swap}, RISK,
  (1999), pp.~47--52.

\bibitem{Carr1999}
\leavevmode\vrule height 2pt depth -1.6pt width 23pt, {\em {Option valuation
  using the Fast Fourier Transform}}, J. Comput. Finance, 2 (1999), pp.~61--73.

\bibitem{Cont2004}
{\sc R.~Cont and P.~Tankov}, {\em Financial modelling with jump processes},
  Chapman \& Hall/CRC, Boca Raton, FL, 2004.

\bibitem{Cuchiero2009}
{\sc C.~Cuchiero, D.~Filipovi{\'c}, E.~Mayerhofer, and J.~Teichmann}, {\em
  Affine processes on positive semidefinite matrices}, Ann. Appl. Probab., 21
  (2011), pp.~397--463.

\bibitem{dafonseca.grasselli.10}
{\sc J.~Da~Fonseca and M.~Grasselli}, {\em Riding on the smiles}, preprint,
  2010.
\newblock Available at
  \url{http://papers.ssrn.com/sol3/papers.cfm?abstract_id=1663448}.

\bibitem{DaFonsecaGrasselliIelpo2008}
{\sc J.~Da~Fonseca, M.~Grasselli, and F.~Ielpo}, {\em Hedging (co)variance risk
  with variance swaps}, Int. J. Theoretical Appl. Finance, 14 (2011),
  pp.~899--943.

\bibitem{fonseca}
{\sc J.~Da~Fonseca, M.~Grasselli, and C.~Tebaldi}, {\em Option pricing when
  correlations are stochastic: an analytical framework}, Rev. Derivatives Res.,
  10 (2007), pp.~151--180.

\bibitem{Dimitroff2009}
{\sc G.~Dimitroff, S.~Lorenz, and A.~Szimayer}, {\em A parsimonious multi-asset
  {H}eston model: Calibration and derivative pricing}, Int. J. Theoretical
  Appl. Finance,  (2010).
\newblock To appear.

\bibitem{dfs2003}
{\sc D.~Duffie, D.~Filipovi{\'c}, and W.~Schachermayer}, {\em Affine processes
  and applications in finance}, Ann. Appl. Probab., 13 (2003), pp.~984--1053.

\bibitem{DPS2000}
{\sc D.~Duffie, J.~Pan, and K.~Singleton}, {\em Transform analysis and asset
  pricing for affine jump-diffusions}, Econometrica, 68 (2000), pp.~1343--1376.

\bibitem{Eberlein2008}
{\sc E.~Eberlein, K.~Glau, and A.~Papapantoleon}, {\em Analysis of {F}ourier
  transform valuation formulas and applications}, Appl. Math. Finance, 17
  (2010), pp.~211--240.

\bibitem{Eberlein1999}
{\sc E.~Eberlein and S.~Raible}, {\em Term structure models driven by general
  {L}\'evy processes}, Math. Finance, 9 (1999), pp.~31--53.

\bibitem{Fischer1994}
{\sc W.~Fischer and I.~Lieb}, {\em Funktionentheorie}, Vieweg, Braunschweig,
  1994.

\bibitem{gou}
{\sc C.~Gourieroux}, {\em Continuous time {W}ishart process for stochastic
  risk}, Econometric Rev., 25 (2006), pp.~177--217.

\bibitem{Gourieroux2010}
{\sc C.~Gourieroux and R.~Sufana}, {\em Derivative pricing with {W}ishart
  multivariate stochastic volatility}, J. Bus. Econom. Statist., 28 (2010),
  pp.~438--451.

\bibitem{TISP}
{\sc I.~S. Gradshteyn and I.~M. Ryzhik}, {\em Table of integrals, series, and
  products}, Elsevier/Academic Press, Amsterdam, seventh~ed., 2007.

\bibitem{grupta}
{\sc A.~K. Gupta and D.~K. Nagar}, {\em Matrix variate distributions}, Chapman
  \& Hall/CRC, Boca Raton, FL, 2000.

\bibitem{Heston1993}
{\sc S.~L. Heston}, {\em A closed-form solution for options with stochastic
  volatility with applications to bond and currency options}, Rev. Finan.
  Stud., 6 (1993), pp.~327--343.

\bibitem{Hormander1967}
{\sc L.~H{\"o}rmander}, {\em An introduction to complex analysis in several
  variables}, D. Van Nostrand Co., Inc., Princeton, N.J.-Toronto, Ont.-London,
  1966.

\bibitem{horn90}
{\sc R.~A. Horn and C.~R. Johnson}, {\em Matrix analysis}, Cambridge University
  Press, Cambridge, 1990.

\bibitem{Hubalek}
{\sc F.~Hubalek and E.~Nicolato}, {\em On multivariate extensions of {L}\'evy
  driven {O}rnstein-{U}hlenbeck type stochastic volatility models and
  multi-asset options}, preprint, 2005.

\bibitem{Hull2003}
{\sc J.~C. Hull}, {\em Options, Futures and other Derivatives}, Prentice-Hall,
  Upper Saddle River, NJ, 2003.

\bibitem{hurd2010}
{\sc T.~R. Hurd and Z.~Zhou}, {\em A {F}ourier transform method for spread
  option pricing}, SIAM J. Financial Math., 1 (2010), pp.~142--157.

\bibitem{Jacod2003}
{\sc J.~Jacod and A.~N. Shiryaev}, {\em Limit theorems for stochastic
  processes}, Springer, Berlin, second~ed., 2003.

\bibitem{Jacod2009}
{\sc J.~Jacod and V.~Todorov}, {\em Do price and volatility jump together?},
  Ann. Appl. Probab., 20 (2010), pp.~1425--1469.

\bibitem{Kallsen2004}
{\sc J.~Kallsen}, {\em A didactic note on affine stochastic volatility models},
  in {From Stochastic Calculus to Mathematical Finance}, Springer, Berlin,
  2006, pp.~343--368.

\bibitem{Koe04}
{\sc K.~K{\"o}nigsberger}, {\em Analysis 1}, Springer, Berlin, sixth~ed., 2004.

\bibitem{Luciano2006}
{\sc E.~Luciano and W.~Schoutens}, {\em A multivariate jump-driven financial
  asset model}, Quant. Finance, 6 (2006), pp.~385--402.

\bibitem{Ma2009}
{\sc J.~Ma}, {\em {A Stochastic Correlation Model with Mean Reversion for
  Pricing Multi-Asset Options}}, Asia-Pacific Finan. Markets, 16 (2009),
  pp.~97--109.

\bibitem{margrabe78}
{\sc W.~Margrabe}, {\em The value of an option to exchange one asset for
  another}, J. Finance, 33 (1978), pp.~177--186.

\bibitem{Nicolato2003}
{\sc E.~Nicolato and E.~Venardos}, {\em Option pricing in stochastic volatility
  models of the {O}rnstein-{U}hlenbeck type}, Math. Finance, 13 (2003),
  pp.~445--466.

\bibitem{pigorsch}
{\sc C.~Pigorsch and R.~Stelzer}, {\em A multivariate {O}rnstein-{U}hlenbeck
  type stochastic volatility model}, preprint, 2009.
\newblock Available from \url{http://www.uni-ulm.de/mawi/finmath.html}.

\bibitem{raible.2000}
{\sc S.~Raible}, {\em L\'evy Processes in Finance: {T}heory, Numerics, and
  Empirical Facts}, PhD thesis, Universit\"at Freiburg i.\ Br., 2000.

\bibitem{rajput89}
{\sc B.~S. Rajput and J.~Rosi{\'n}ski}, {\em Spectral representations of
  infinitely divisible processes}, Probab. Theory Related Fields, 82 (1989),
  pp.~451--487.

\bibitem{sato99}
{\sc K.-I. Sato}, {\em L\'evy processes and infinitely divisible
  distributions}, Cambridge University Press, Cambridge, 1999.

\bibitem{Schoutens2003}
{\sc W.~Schoutens}, {\em \levy Processes in Finance.}, Wiley, New York, 2003.

\bibitem{Swischuk2004}
{\sc A.~Swischuk}, {\em Modeling of variance and volatility swaps for financial
  markets with stochastic volatility}, Wilmott,  (2005), pp.~64--72.

\bibitem{Taylor2009}
{\sc S.~J. Taylor and Y.~Wang}, {\em {Option prices and risk neutral densities
  for currency cross rates}}, J. Futures Markets, 30 (2010), pp.~324--360.

\end{thebibliography}
\end{small}
\end{document}